\DeclareMathAlphabet{\mymathbb}{U}{BOONDOX-ds}{m}{n}
\theoremstyle{plain}
\newtheorem{theorem}             {Theorem}
\newtheorem{proposition}[theorem]{Proposition}
\newtheorem{lemma}[theorem]{Lemma}
\newtheorem*{theorem*}    {Theorem}
\newtheorem*{proposition*}{Proposition}
\newtheorem*{lemma*}      {Lemma}
\newtheorem*{corollary*}  {Corollary}
\newtheorem*{conjecture*} {Conjecture}
\newtheorem{definition}[theorem]{Definition}
\theoremstyle{definition}
\newtheorem*{definition*}{Definition}
\newtheorem*{example*}   {Example}
\theoremstyle{remark}
\newtcolorbox{mybox}[2][]{
               = {yshift=-8pt},
  colback      = cyan!6!white,
  colframe     = cyan!1!black,
  halign       = flush left,
  fonttitle    = \bfseries\sffamily,
  colbacktitle = cyan!50!black,
  title        = #2,#1,
  }
\definecolor{dark-gray}{gray}{0.40}
\definecolor{quantumviolet}{HTML}{53257F} 
\definecolor{quantumlightviolet}{HTML}{A088B1}
\definecolor{quantumgreen}{HTML}{00826F}
\definecolor{quantumrose}{HTML}{EDB3FF} 
\definecolor{quantumdarkrose}{HTML}{F06292}
\definecolor{quantumturquoise}{HTML}{00C9AF}
\definecolor{quantumblue}{HTML}{85B1CC}
\definecolor{quantumdarkgray}{HTML}{4C4452}
\definecolor{quantumgray}{HTML}{555555}
\newcommand{\ie}{i.e.\ }
\newcommand{\eg}{e.g.\ }
\newcommand{\phsp}[2]{%
\ifthenelse{\isempty{#1}}
	{\rule[-.5\baselineskip]{0pt}{.5\baselineskip}}%
	{\rule[#1\baselineskip]{0pt}{#2\baselineskip}}%
}
\newcommand{\dummy}{\text{--}}
\newcommand{\Id}{\mathds{1}}
\newcommand{\Tr}{\text{Tr}}
\newcommand{\Mcomma}{\text{ ,}}
\newcommand{\defeq}{\colonequals}
\newcommand{\tuple}[1]{\left\langle #1 \right\rangle} 
\newcommand{\enset}   [1]{\mathopen{ \{ }#1\mathclose{ \} }} 
\newcommand{\family}   [1]{\mathopen{ ( }#1\mathclose{ ) }} 
\newcommand{\fdec}    [3]{#1 \colon #2 \longrightarrow #3}
\newcommand{\bfLambda}{\bm{\Lambda}}
\newcommand{\dd}{\text{d}}
\newcommand{\intg}[3]{\int_{#1}#2\,\mathrm{d}#3}
\newcommand{\Rext}{\overline{\mathbb{R}}}
\newcommand{\leqnomode}{\tagsleft@true\let\veqno\@@leqno}
\newcommand{\reqnomode}{\tagsleft@false\let\veqno\@@eqno}
\newcommand{\proglabel}[2]{%
   \protected@write \@auxout {}{\string \newlabel {#1}{{#2}{\thepage}{#2}{#1}{}} }%
   \hypertarget{#1}{#2}
}
\newcommand{\Bc}{\mathcal{B}}
\newcommand{\Fc}{\mathcal{F}}
\newcommand{\Mc}{\mathcal{M}}
\newcommand{\Oc}{\mathcal{O}}
\newcommand{\Pc}{\mathcal{P}}
\newcommand{\Xc}{\mathcal{X}}
\newcommand{\C}{\mathbb{C}}
\newcommand{\N}{\mathbb{N}}
\newcommand{\R}{\mathbb{R}}
\newcommand{\Z}{\mathbb{Z}}
\newcommand{\orcid}[1]{\href[hidelinks]{https://orcid.org/#1}{\includesvg[height=2ex]{ORCIDiD_iconvector}}}
\begin{document}

    \title{Contextuality and Wigner negativity are equivalent for continuous-variable quantum measurements}

    \author{Robert I. Booth}
    \email{robert.booth@ed.ac.uk}
    \affiliation{School of Informatics, University of Edinburgh, Edinburgh, United Kingdom}
    \affiliation{LORIA CNRS, Inria-MOCQUA, Universit\'e de Lorraine, F-54000 Nancy, France}
    \affiliation{Sorbonne Université, CNRS, LIP6, F-75005 Paris, France}
    \author{Ulysse Chabaud}
    \email{uchabaud@caltech.edu}
    \affiliation{Institute for Quantum Information and Matter, Caltech, Pasadena, CA 91125 USA}
    \author{Pierre-Emmanuel Emeriau}
    \email{pe.emeriau@quandela.com}
    \affiliation{Quandela, 10  Boulevard  Thomas  Gobert, 91120,  Palaiseau, France}
    %



\begin{abstract}
 Quantum computers promise considerable speed-ups with respect to their classical counterparts. However, the identification of the innately quantum features that enable these speed-ups is challenging. In the continuous-variable setting---a promising paradigm for the realisation of universal, scalable, and fault-tolerant quantum computing---contextuality and Wigner negativity have been perceived as two such distinct resources. Here we show that they are in fact equivalent for the standard models of continuous-variable quantum computing. While our results provide a unifying picture of continuous-variable resources for quantum speed-up, they also pave the way towards practical demonstrations of continuous-variable contextuality, and shed light on the significance of negative probabilities in phase-space descriptions of quantum mechanics.
\end{abstract}

\maketitle



\section{Introduction}

With the onset of quantum information theory, the weirdness of quantum mechanics has transitioned from being a bug to being a feature, and the first demonstrations of quantum speedup have recently been achieved~\cite{arute2019quantum,zhong2020quantum}, building on inherently nonclassical properties of physical systems.
While entanglement is used daily for the calibration of current quantum experiments, it was originally perceived as a `spooky action at a distance' by Einstein. This led him, Podolsky and Rosen (EPR) to speculate about the incompleteness of quantum mechanics~\cite{einsteincan1935} and the existence of a deeper theory over `hidden' variables reproducing the predictions of quantum mechanics without its puzzling nonlocal aspects. 

During the same period, Wigner was also looking for a more intuitive description of quantum mechanics, and he obtained a phase-space description akin to that of classical theory~\cite{Wigner1932}. However, a major difference with the classical case was that the Wigner function---the quantum equivalent of a classical probability distribution over phase space---could display negative values. These `negative probabilities' seemingly prevented a classical phase-space interpretation of quantum mechanics.

More than thirty years later, the seminal results of Bell \cite{belleinstein1964,bell1966} and Kochen and Specker \cite{kochen1975problem} ruled out the possibility of finding the underlying hidden-variable model for quantum mechanics envisioned by EPR, thus establishing nonlocality, and its generalisation contextuality, as fundamental properties of quantum systems.

At an intuitive level, contextuality and negativity of the Wigner function are
properties of quantum states that seek to capture similar characteristics of
quantum theory: the non-existence of a classical probability distribution that
describes the outcomes of the measurements of a quantum system. 

In more operational terms, contextuality is present whenever any hidden-variable description of the behaviour of a system is inconsistent with the basic assumptions that
\begin{enumerate*}[label=(\roman*)]
\item
all of its observable properties may be assigned definite values at all times, and
\item
jointly measuring compatible observables does not disturb these global value assignments, or, in other words, these assignments are context-independent.
\end{enumerate*} 
Aside from its foundational importance, contextuality has been increasingly identified as an essential ingredient for enabling a range of quantum-over-classical advantages in information processing tasks, which include the onset of universal quantum computing in certain computational models \cite{raussendorf2013contextuality,howard2014contextuality,abramsky2017contextual,bermejo2017contextuality,abramsky2017quantum}. 

Similarly, the negativity of the Wigner function, or Wigner negativity for short, is also crucial for quantum computational speedup as quantum computations described by nonnegative Wigner functions can be simulated efficiently classically~\cite{Mari2012}.

Importantly, quantum information can be encoded with discrete but also continuous variables (CV)~\cite{lloyd1999quantum}, using continuous quantum degrees of freedom of physical systems such as position or momentum.
The study of contextuality has mostly focused on the simpler discrete-variable
setting~\cite{abramsky2011sheaf,csw,Spekkens2005,Xiang2013,Helmut2006,bartosik2009experimental,kirchmair2009state}. In \cite{spekkens2008negativity}, it was shown that generalised contextuality is equivalent to the non-existence of a nonnegative quasiprobability representation. The caveat is that to check if a system is indeed contextual, one would have to consider all possible quasi-probability distributions.
Focusing on one particular quasiprobability distribution, Howard \textit{et al.} \cite{howard2014contextuality} showed that, for discrete-variable systems of odd prime-power dimension, negativity of the (discrete)
Wigner function~\cite{gross2006hudson} corresponds to contextuality with respect
to Pauli measurements. The equivalence was later generalised to odd
dimensions in \cite{delfosse2017equivalence} and to qubit systems in
\cite{raussendorf2017contextuality,delfosse2015wigner}. Under the hypothesis of
noncontextuality, it has also been shown that the
discrete Wigner function is the only possible quasiprobability representation for odd prime dimensions \cite{schmid2021only}.

However, the EPR paradox~\cite{einsteincan1935} and the phase-space description derived by Wigner~\cite{Wigner1932} were originally formulated for CV systems.
Moreover, from a practical point-of-view, CV quantum systems are emerging
as very promising candidates for implementing
quantum informational and computational tasks \cite{braunstein2005quantum,weedbrook2012gaussian,crespi2013integrated,Bourassa2021blueprintscalable,walschaers2021non,chabaud2021continuous} as they offer
unrivalled possibilities for quantum error-correction~\cite{Gottesman2001,cai2021bosonic}, deterministic generation
of large-scale entangled states over millions of subsystems~\cite{yokoyama2013ultra,yoshikawa2016invited}
and reliable and efficient detection methods, such as homodyne or heterodyne detection~\cite{Leonhardt-essential,Ohliger2012}.

Since contextuality and Wigner negativity both seem to play a fundamental role as nonclassical features enabling quantum-over-classical advantages, a natural question arises:

\begin{center}
\textit{What is the precise relationship between contextuality and Wigner negativity in the CV setting?}
\end{center}

Here we prove that contextuality and Wigner negativity are equivalent with respect to CV Pauli measurements, thus unifying the quantum quirks that prevented Einstein and Wigner from obtaining a classically intuitive description of quantum mechanics.
We build on the recent extension of the sheaf-theoretic framework of contextuality~\cite{abramsky2011sheaf} to the CV setting~\cite{barbosa2019continuous}.
Note that this treatment of contextuality is a strict generalisation of the standard notion of Kochen--Specker contextuality \cite{KochenSpecker1967,kochen1975problem}, extended to CV systems.
Using this framework, we prove the equivalence between contextuality and Wigner negativity with respect to generalised position and momentum quadrature measurements, \ie CV Pauli measurements.
These are amongst the most commonly used measurements in CV
quantum information, in particular in quantum optics
\cite{adesso2014continuous,walschaers2021non}, and for defining the standard
models of CV quantum
computing~\cite{lloyd1999quantum,Gottesman2001}.

\section{Phase space and Wigner function}
We fix $M \in \N^*$ to be the number of qumodes, that is, $M$
CV quantum systems. For a single qumode, the corresponding state
space is the Hilbert space of square-integrable functions \(L^2(\R)\) and the total Hilbert space for all \(M\) qumodes is then
\(L^2(\R)^{\otimes M} \cong L^2(\R^{M})\).
To each qumode, we associate the
usual position and momentum operators. We write $\hat q_k$ and $\hat p_k$ the position and momentum operators of the $k^{th}$ qumode. In the context of quantum optics, any linear combination of such operators is called a quadrature of the electromagnetic field~\cite{Leonhardt-essential}. We use this terminology in the rest of the article: any \(\R\)-linear combination of position and momentum operators is called a quadrature. 

The Wigner representation of a quantum state in the Hilbert space \(L^2(\R^{M})\) is a function defined on the phase space \(\R^{2M}\), which can be intuitively understood as a quantum version of the position and momentum phase space of a classical particle. 
We equip this phase space with a symplectic form denoted ${\Omega}$: for $\bm
x,\bm y \in \R^{2M}$, $\Omega(\bm x,\bm y) := \bm x \cdot J \bm y \quad \text{where} \quad J = \left(\begin{smallmatrix}
    0 & \Id_M \\
    - \Id_M & 0 
    \end{smallmatrix}\right)$,
in a given basis \((\bm{e}_k,\bm{f}_k)_{k=1}^{M}\) of \(\R^{2M}\), which is
therefore a symplectic basis for the phase space.
We also equip $\R^{2M}$ with its usual scalar product denoted by $\dummy \cdot \dummy$.

A Lagrangian vector subspace is defined as a maximal isotropic subspace, that is, a maximal subspace on which the symplectic form $\Omega$ vanishes. For a symplectic space of dimension $2M$, Lagrangian subspaces are of dimension $M$. See \cite{Sudarshan1988} for a concise introduction to the symplectic structure of phase space and \cite{Gosson2006symplectic} for a detailed review.

To any \(\bm{x} \in \R^{2M}\) we
associate a quadrature operator as follows. Assume w.l.o.g.\ that \(\bm{x} =
\sum_{k}q_k \bm{e}_k + \sum_{k}p_k \bm{f}_k\), and put $\hat{\bm{x}} = \sum_{k=1}^M q_k \hat{q}_k + \sum_{k=1}^{M} p_k \hat{p}_k$, where the indices indicate on which qumode each operator acts. Then, it is
straightforward to verify, using the canonical commutation relations, that $[\hat{\bm{x}},\hat{\bm{y}}] = i\Omega(\bm{x},\bm{y}) \hat{\Id}$, 
\ie the symplectic structure encodes the commutation relations of quadrature operators.

The elements of \(\R^{2M}\) can also be associated to translations in phase
space. 
Firstly, for any \(s \in \R^{M}\), define the Weyl operators, acting on
\(L^2(\R^{M})\), by $\hat X(s)\psi(t) = \psi(t-s)$ and $\hat Z(s)\psi(t) = e^{ist} \psi(t)$,
for all $t\in\mathbb R^M$.
Then, define the displacement operator for any \(\bm x = (q,p) \in \R^M
\times \R^M\) in the symplectic basis \((\bm{e}_k,\bm{f}_k)_{k=1}^{M}\) by $\hat D(\bm{x}) = e^{-i\frac{q \cdot p}{2}}\hat X(q)\hat Z(p)$,
so that $[\hat D(\bm{x}),\hat D(\bm{y})] = e^{i \Omega(\bm{x},\bm{y})} \hat{\Id}$.

There are several equivalent ways of defining the Wigner function of a quantum state
\cite{ferraro2005gaussian,cahill1969density, de_gosson_wigner_2017}. We follow
the conventions adopted in \cite{de_gosson_symplectic_2011}. The characteristic function \(\Phi_\rho : \R^{2M} \to \C\) of
a density operator \(\rho\) on \(L^2(\R^{M})\) is defined as $\Phi_\rho(\bm{x}) \defeq \Tr(\rho \hat D(-\bm{x}))$.
The Wigner function \(W_\rho\) of \(\rho\) is then defined as the symplectic Fourier transform of the  characteristic function of $\rho$: $W_\rho(\bm{x}) \defeq \operatorname{FT}[\Phi_\rho](J\bm{x})$.
The Wigner function is a real-valued square-integrable function on \(\R^{2M}\), and one can recover the probabilities for
quadrature measurements from its marginals: if \(W\) is the Wigner function of a pure state
\(\psi \in L^2(\R^{M})\) such that \(W\) is integrable on \(\R^{2M}\), then identifying \(\bm{x}\) with \((q,p)\in\R^M\times\R^M\) in the same basis \((\bm
e_k, \bm f_k)_{k=1}^M\) as before,
\begin{align}
    \frac{1}{(\sqrt{2 \pi})^M} \intg{\R^M}{W(q,p)}{p} &= |\psi(q)|^2, \\
    \frac{1}{(\sqrt{2 \pi})^M}  \intg{\R^M}{W(q,p)}{ q} &= \left|\operatorname{FT}[\psi](p)\right|^2.
\end{align}
In general, if \(\bm x \in \R^{2M}\) describes an arbitrary quadrature, the probability of obtaining an outcome $x$ in \(E\subseteq \R\) when measuring the quadrature $\hat{\bm x}$ is
\begin{equation}
  \label{eq:quadrature_probabilities}
  \mathrm{Prob}[x \in E | \rho]
  = \frac{1}{(\sqrt{2\pi})^M} \int_{A} {W_\rho(\bm{y})}\dd{\bm{y}},
\end{equation}
where \(A = \left\{\bm y \in \R^{2M} \mid \bm y \cdot \bm x \in E\right\}\). This corresponds to marginalising the Wigner function over the axes orthogonal to \(\bm{x}\).
If the Wigner function only takes
nonnegative values, it can therefore be interpreted as a simultaneous
probability distribution for position and momentum measurements (and in general, any quadrature obtained as a linear combination of these).

\section{Continuous-variable contextuality}

In what follows, we use the contextuality formalism from \cite{barbosa2019continuous}, which is the extension of \cite{abramsky2011sheaf} to the CV setting. We refer to the Supplemental Material or Refs. \cite{billingsley2008probability,tao2011introduction} for an introduction to this formalism and the associated tools of measure theory. 

\subsection*{Measurement scenario}
In order to define `contextuality' in a CV experiment, we need an abstract description of the experiment, called a measurement scenario, which is defined by a triple $\tuple{\Xc,\Mc,\bm \Oc}$ as follows:
in a given setup, experimenters can choose different measurements to perform on a physical system. Each possible measurement is labelled and $\Xc$ is the corresponding (possibly infinite) set of measurement labels.
Several compatible measurements can be implemented together (for instance, measurements on space-like separated systems). Maximal sets of compatible measurements define a context, and $\Mc$ is the set of all such contexts.
For a measurement labelled by $\bm x\in \Xc$, the corresponding outcome space is $\bm \Oc_{\bm x} = \langle \Oc_{\bm x},\Fc_{\bm x} \rangle$, which is a measurable space with an underlying set $\Oc_{\bm x}$ and its associated Lebesgue $\sigma$-algebra $\Fc_{\bm x}$. The collection of all outcome spaces is denoted $\bm \Oc = (\bm \Oc_{\bm x})_{\bm x \in \Xc}$. For various measurements labelled by elements of a set $U \subseteq X$, the corresponding joint outcome space is denoted $\bm\Oc_U\defeq\prod_{\bm x\in U}\bm \Oc_{\bm x}$. 
In this article, we consider the following measurement scenario: 

\textit{Definition 1.}---The quadrature measurement scenario $\tuple{\Xc,\Mc,\bm \Oc}_{\rm quad}$ is defined as follows: (i) the set of measurement labels is the symplectic phase space $\Xc \defeq \R^{2M}$; (ii) the contexts are Lagrangian subspaces of $\R^{2M}$, so that the set of contexts $\Mc$ is the set of all Lagrangian subspaces of $\Xc$; (iii) for each $\bm x \in \Xc$, the corresponding outcome space is $\bm \Oc_{\bm x} \defeq \langle \R,\sigma
\rangle$ ($\sigma$ being the Lebesgue sigma algebra of $\R$).

$\tuple{\Xc,\Mc,\bm \Oc}_{\rm quad}$ is to be interpreted as follows: given a quantum state $\rho$, the measurement
corresponding to the label $\bm x \in \Xc$ is given by the measurement of the
corresponding quadrature $\hat{\bm{x}}$ of the state, while contexts correspond to maximal sets of commuting quadratures.
This scenario consists in a continuum of possible measurements, each of which corresponds to a quadrature operator with continuous spectrum (see Fig.~\ref{fig:homodyne}).

\subsection*{Empirical model}
While measurement scenarios describe experimental setups, empirical models capture in a precise way the probabilistic behaviours that may arise upon performing measurements on physical systems. In practice, these amount to tables of normalised frequencies of outcomes gathered among various runs of the experiment, or to tables of predicted outcome probabilities obtained by analytical calculation. Formally:

\textit{Definition 2.}---An empirical model on a measurement scenario
$\tuple{\Xc,\Mc,\Oc}$ is a family $e = \family{e_C}_{C \in \Mc}$, where $e_C$ is
a probability measure on the space $\bm \Oc_C$ for each context
$C \in \Mc$.


\subsection*{Noncontextuality and hidden-variable models}
Informally, the empirical data is noncontextual whenever local descriptions (within a valid context) can be glued together consistently so that it can be described by a global probability measure (over all contexts).

\textit{Definition 3.}---An empirical model $e=\family{e_C}_{C \in \Mc}$ on a $\tuple{\Xc,\Mc,\bm \Oc}$ is noncontextual if there exists a probability measure $p$ on the space $\bm \Oc_\Xc$ such that marginalising $p$ on a context gives back the empirical prediction \ie $p|_C = e_C$ for every $C \in \Mc$.

Noncontextuality is equivalent to the existence of a deterministic hidden-variable model (HVM) \cite{barbosa2019continuous}.

\textit{Definition 4.}---A HVM on a measurement scenario $\tuple{\Xc,\Mc,\bm\Oc}$ is a tuple
  $\tuple{\bm \Lambda, p, (k_C)_{C \in \Mc}}$ where: (i)
   $\bfLambda = \tuple{\Lambda,\Fc_\Lambda}$ is the measurable space of hidden variables; 
  (ii) $p$ is a probability distribution on $\bfLambda$;
  (iii) for each context $C \in \Mc$, $k_C$ is a probability kernel between the measurable spaces $\bfLambda$ and $\bm \Oc_C$, i.e.\ $k_C$ is a measurable function over $\bfLambda$ and a probability measure over $\bm \Oc_C$.

Determinism for the HVM is further ensured by requiring that each hidden variable gives a predetermined outcome. That is, for all contexts $C \in \Mc$ and for every $\lambda \in \Lambda$, $k_C(\lambda,\dummy) = \delta_{\bm x}$ is a Dirac measure at some $\bm x \in \Oc_C$.

The space $\bm \Oc_\Xc$ can be thought of as a space of hidden variables, while the probability measure $p$ provides probabilistic information about them. 
Hidden variables are supposed to provide an underlying description of the physical world at perhaps a more fundamental level than the empirical-level description via the quantum state. The motivation is that hidden variables could explain away some of the non-intuitive aspects of the empirical predictions of quantum mechanics, which would then arise from an incomplete knowledge of the true state of a system rather than being fundamental features. 

\subsection*{Quantum models and linear assignments}
Since we consider experiments arising from quadrature measurements of a quantum system $\rho$, we restrict our attention to empirical models $e = (e_C)_{C \in \Mc}$ reproducing the Born rule, which we refer to as quantum empirical models. We will use the notation $e^{\rho} = (e_C^\rho)_{C \in \Mc}$ to make explicit the dependence in the state $\rho$. If $e^\rho$ is noncontextual in $\tuple{\Xc,\Mc,\bm \Oc}_{\rm quad}$, we say that the density operator $\rho$ is noncontextual for quadrature measurements.

In $\tuple{\Xc,\Mc,\bm \Oc}_{\rm quad}$, for each context $C \in \Mc$, the set $\Oc_C = \prod_{\bm x \in C} \R$ can be seen as the set of functions from $C$ to $\R$ with the corresponding product $\sigma$-algebra. These functions are called  local value assignments. In contrast, functions \(\Xc \to \R\) which assign a tentative outcome to all quadratures simultaneously are called global value assignments. Contextuality then expresses the tension which arises when trying to explain different experimental predictions across distinct contexts (local value assignments) in terms of global value assignments.

\textit{Linearity of value assignments.}---
Before connecting the notions of contextuality and negativity of the Wigner function in $\tuple{\Xc,\Mc,\bm \Oc}_{\rm quad}$, we must resolve the following mismatch: the Wigner function of a density operator $\rho$ is a quasiprobability
distribution over $\Xc=\R^{2M}$, while a density operator $\rho$ which is noncontextual for quadrature measurements corresponds by Definition 3 to
a global probability measure over the set $\Oc_\Xc$ of global assignments, which can be seen in this case as the set of
functions $\Xc=\R^{2M} \to \R$. In general, the latter is much larger than the former.

To solve this issue, we show that we can restrict to linear global value assignments w.l.o.g. so that $\Oc_\Xc$ can be replaced by $\Xc^*$, the linear dual of $\Xc$. Since $\Xc^*$ is isomorphic to $\Xc$, this allows us to resolve
the mismatch:

\noindent\textit{Proposition 1.}---If $M \geqslant 2$, global value assignments can w.l.o.g. be taken to be linear functions $\Xc \to \R$, and the set of global value assignments then forms an $\R$-linear space of dimension $2M$, namely $\Xc^*$.

We refer to Appendix A for the proof. Therefore w.l.o.g. for any $U \subset \Xc$, we can restrict $\Oc_U$ to be the set of functions from $U \to \R$ that admit a (not necessarily unique) extension by an $\R$-linear function on $\Xc$.

\section{Equivalence between quantum contextuality and Wigner negativity}
We may now prove our main result, \ie the equivalence between contextuality and Wigner negativity in the quadrature measurement scenario $\tuple{\Xc,\Mc,\bm \Oc}_{\rm quad}$:

\noindent\textit{Theorem 1.}---For any \(M \geqslant 2\), a density operator \(\rho\) on \(L^2(\R^M)\) is noncontextual for quadrature measurements if and only if its Wigner function $W_\rho$ is both integrable and nonnegative.

We refer to Appendix B for the full proof.

\textit{Sketch of proof.}---Using Definition 3 and Proposition 1, a density operator $\rho$ which is noncontextual for quadrature measurements corresponds to a probability measure $p$ on the space $\Xc^*$ describing the outcomes of quadrature measurements on $\rho$. We show that the Fourier transform of $p$ must be equal to the characteristic function $\Phi_\rho$. Hence, $p$ and  $W_\rho$ have the same Fourier transform, which gives $w_p=W_\rho$ and thus $W_\rho\ge0$, since $w_p$ is the density of a probability measure. Conversely, if the Wigner function is assumed to be integrable and nonnegative, we obtain the outcome probabilities for quadrature measurements by marginalising along the correct axes.\qed

\textit{Experimental setup.}---The quadrature measurement scenario requires measuring any linear combination of multimode position and momentum operators, \eg $\hat q_1 + 2 \hat p_1 + 5 \hat q_2$. The measurement setup is represented in Fig.~\ref{fig:homodyne}.

To do so experimentally, we first apply phase-shift operators $\hat R$ for each individual qumode to obtain the right quadrature for each mode.
Then, we apply CZ gates of the form $e^{i g \hat q_k \hat q_l}$ for $g \in \R$ to pairs of qumodes $k$ and $l$ in order to sum them. 
This permits the construction of the desired linear combination, which is stored in one quadrature of a qumode. The measurement can then be implemented with standard homodyne detection,
which consists in a Gaussian measurement of a quadrature of the field, by mixing the state on a balanced beam splitter (dashed line) with a strong coherent state (local osccilator LO).
The intensities of both output arms are measured with photodiode detectors and their difference yields a value proportional to a quadrature $\hat x_\phi:=(\cos\phi)\hat q+(\sin\phi)\hat p$ of the input qumode, depending on the phase $\phi$ of the local oscillator.
All of these steps can be implemented experimentally with current optical technology \cite{ferraro2005gaussian,su2013gate}. 

\begin{figure}
    \centering
    \includegraphics[scale=0.9]{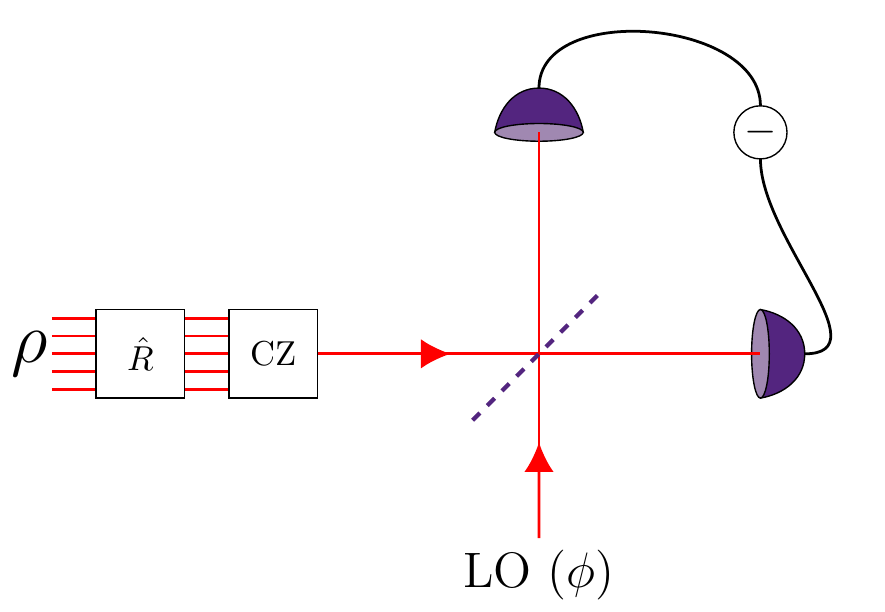}
    \caption{Experimental protocol corresponding to the quadrature measurement scenario $\tuple{\Xc,\Mc,\bm \Oc}_{\rm quad}$. 
    }
    \label{fig:homodyne}
\end{figure}

\section{Discussion}

We have shown that Wigner negativity is equivalent to contextuality with respect to CV quantum measurements which may be realised using homodyne detection, a standard detection method in CV \cite{yokoyama2013ultra}, and the basis of several computational models in CV quantum information \cite{DouceCVIQP2017,Chabaud2017hom,BShomodyne2017,GKP2001}.

From a practical perspective, this implies that contextuality is a necessary resource for achieving a computational advantage within the standard model of CV quantum computation~\cite{lloyd1999quantum}. Like in the discrete-variable case \cite{howard2014contextuality}, CV contextuality supplies the necessary ingredients for CV quantum computing.

From a foundational perspective, the failure of a local hidden-variable model describing quantum mechanical predictions, as enlightened by Bell regarding the EPR paradox, is very closely related to the impossibility of a nonnegative phase-space distribution, as described by Wigner. 
Hence, our result implies that the negativity of phase-space distributions can be cast as an obstruction to the existence of a noncontextual hidden-variable model.

The EPR state \cite{einsteincan1935} describes a CV state that has a nonnegative Wigner function and still violates a Bell inequality \cite{Banaszek1998}. This is possible since it necessitates parity operator measurements that do not have a nonnegative Wigner representation \cite{spekkens2008negativity}, and thus is not in contradiction with our result. Indeed, our quadrature measurement scenario is nonnegatively represented in phase space: since homodyne detection (and quadrature measurements in general) is a Gaussian measurement, any possible quantum advantage is due to Wigner negativity being present before the detection setup. 

Our results open up a number of future research directions. Firstly, our present argument requires considering a measurement scenario that comprises an uncountable family of measurement labels (the entire phase space $\Xc=\R^{2M}$). From an experimental perspective, it is crucial to wonder what happens if we restrict to a finite family of measurement labels and see whether we can derive a robust version of this theorem. 

Another question concerns the link between quantifying contextuality and quantifying Wigner negativity. Quantifying contextuality for CV systems is possible via semidefinite relaxation \cite{barbosa2015contextuality}. Also, there exist various measures of Wigner negativity~\cite{kenfack2004negativity,mari2011directly}.
In particular, witnesses for Wigner negativity have been introduced in \cite{chabaud2021witnessing}, whose violation gives a lower bound on the distance to the set of states with nonnegative Wigner function. It would be highly desirable to establish a precise and quantified link between these different measures of nonclassicality.

This equivalence may also be useful
in better understanding the problem of characterising those quantum-mechanical
states whose Wigner function is nonnegative. This is a notoriously thorny issue
when one considers mixed states, and, to the authors' knowledge, progress has
mostly stalled since the 90s \cite{narcowich_necessary_1986,
  narcowich_unified_1988, brocker_mixed_1995, de_gosson_symplectic_2011}. Strong
mathematical tools are being developed to detect contextuality from the
theoretical description of a state \cite{abramsky2012cohomology,
  abramsky2015contextuality,abramsky2017contextual, caru2017, caru2018towards,
  raussendorf_cohomological_2019, okay_classifying_2021}, although much work
remains in applying them to CV and understanding their
relation to machinery previously developed to tackle the positivity question.

On the practical side, our result paves the way for surprisingly simple
demonstrations of nonclassicality. Contextual states are typically associated
with violated Bell-like inequalities---although this result has only been
formally proven in the case of a finite number of measurement settings
\cite{barbosa2019continuous}, and needs to be generalised to CV. In principle, this means that one should be able to violate such an
inequality with a setup as simple as a single photon and a heterodyne detection,
necessitating only a single beamsplitter. The existence of such a genuinely
continuous Bell inequality has been elusive, since previously-observed
violations amount to encodings of discrete-variable inequalities in
CV~\cite{plastino2010state, he2010bell, mckeown2011testing,
  asadian2015contextuality, laversanne2017general, ketterer2018continuous,
  Fred2003}.


\paragraph*{Acknowledgements.}
The authors would like to thank M.\ Howard, S.\ Mansfield and  T.\ Douce for enlightening discussions. They are also grateful to D.\ Markham, E.\ Kashefi, E.\ Diamanti and F.\ Grosshans for their mentorship and the wonderful group they have created at LIP6.
UC acknowledges interesting discussions with S.\ Mehraban and J.\ Preskill.
UC acknowledges funding provided by the Institute for Quantum Information
and Matter, a National Science Foundation Physics Frontiers Center (NSF Grant PHY-1733907).
RIB was supported by the Agence Nationale de la Recherche VanQuTe project (ANR-17-CE24-0035).

\paragraph*{Related work.} An analogous, independent proof of our main theorem was subsequently derived in \cite{haferkamp2021equivalence}.

\medskip

\section*{Proof Proposition 1}

\noindent\textit{Proposition 1.}---If $M \geqslant 2$, global value assignments can w.l.o.g. be taken to be linear functions $\Xc \to \R$, and the set of global value assignments then forms an $\R$-linear space of dimension $2M$, namely $\Xc^*$.

\noindent \textit{Proof of Proposition~1.}---Let $L \subseteq \Xc$ be a Lagrangian subspace and let $e^\rho$ be a quantum empirical model. For $U \in \Fc_L$ a Lebesgue measurable set of functions $L \to \R$ and for $\bm x \in L$, let $\pi_{\bm x}(U):= \left\{f(\bm x) \mid f \in U \right\} \subseteq \R$. We start by showing the following result:

\noindent\textit{Lemma 1.}---
Let $U \in \Fc_L$ be a Lebesgue measurable set of functions $L \to \R$ such that
$\pi_{\bm x}(U)$ is distinct from $\R$ for a finite number of $\bm x \in L$.
Then there exists a subset $U_{\mathrm{lin}}$ of linear functions $L \to \R$
such that for all $\bm x \in L$, $\pi_{\bm x}(U_{\mathrm{lin}}) \subseteq
\pi_{\bm x}(U)$ and \(e_L^\rho(U_{\mathrm{lin}}) = e_L^\rho(U)\).

\textit{Proof.}---First, let $(\bm e_k)_{k=1,\dots,M}$ be a basis of $L \cong \R^M$. Let $P$ be the joint spectral measure of $\{{\bm{ \hat e_1}},\dots,{\bm{ \hat e_M}}\}$.
For any $ \bm y \in L$, define the function
    \begin{align}
        f_{\bm y} : L & \longrightarrow \R \\
              \bm x & \longmapsto \bm x \cdot \bm y \, ,
    \end{align}
where $\dummy \cdot \dummy $ is the usual Euclidean scalar product on $L \cong \R^M$.
For any $\bm x \in L$, $P_{\bm {\hat x}}$ is the push-forward of $P$ by the measurable function $f_{\bm x}$ by the functional calculus on $M$ commuting observables (see Supplemental Material for a concise introduction to measure theory; this includes Ref. \cite{tychonoff1930topologische} for the product $\sigma$-algebra).

Then,
\begin{align}
    \Tr\!\left(\!\rho \prod_{\bm x \in L}\!P_{\bm {\hat x}} \circ \pi_{\bm x}(U)\!\right) 
    \!&=\!\Tr\!\left(\!\rho \prod_{\bm x \in L}\!P \left(f_{\bm x}^{-1}\left( \pi_{\bm x}(U)\right)\right)\!\right) \\ 
    \!&=\!\Tr\!\left(\!\rho  P\!\left(\bigcap_{\bm x \in L} f_{\bm x}^{-1}\!\left(\pi_{\bm x}(U)\right)\!\right)\!\right)\!
\end{align}
with
\begin{align}
    \bigcap_{\bm x \in L} f_{\bm x}^{-1}(\pi_{\bm x}(U)) = \left\{ \bm y \in L \mid \forall \bm x \in L, \, \bm x \cdot \bm y \in \pi_{\bm x}(U) \right\}.
\end{align}
Now define
\begin{equation}
    U_{\mathrm{lin}} \defeq \left\{
      \begin{aligned}
        L &\longrightarrow \R \\
        \bm x &\longmapsto \bm x \cdot \bm y
      \end{aligned}
      \;\bigg\vert\;\bm y \in \bigcap_{\bm x \in L} f_{\bm x}^{-1}(\pi_{\bm x}(U)) 
    \right\}\!.
\end{equation}
By construction,
\begin{widetext}
\begin{align}
    \bigcap_{\bm x \in L} f_{\bm x}^{-1}(\pi_{\bm x}(U_{\mathrm{lin}}))
    &= \bigcap_{\bm x \in L} f_{\bm x}^{-1} \left( \left\{ \bm x \cdot \bm y \mid \bm y \in L \text{ s.t. } \forall \bm z \in L, \, \bm y \cdot \bm z \in \pi_{\bm z}(U) \right\} \right)\\
    &= \bigcap_{\bm x \in L}  \left\{ \bm \alpha \in L \mid \bm x \cdot \bm \alpha = \bm x \cdot \bm y \text{ with } \bm y \in L \text{ s.t. } \forall \bm z \in L, \, \bm y \cdot \bm z \in \pi_{\bm z}(U) \right\} \\
    &= \left\{ \bm \alpha \in L \mid \forall \bm x \in L, \, \bm x \cdot \bm \alpha = \bm x \cdot \bm y \text{ with } \bm y \in L \text{ s.t. } \forall \bm z \in L, \, \bm y \cdot \bm z \in \pi_{\bm z}(U) \right\} \\
    &= \left\{ \bm \alpha \in L \mid \forall \bm z \in L, \, \bm \alpha \cdot \bm z \in \pi_{\bm z}(U) \right\} \label{eq:equality}\\
    &= \bigcap_{\bm x \in L} f_{\bm x}^{-1}(\pi_{\bm x}(U))\, , 
\end{align}
\end{widetext}
where \eqref{eq:equality} follows from that fact that $(\forall \bm x \in L$, $\bm x \cdot  \bm \alpha = \bm x \cdot \bm y)$ implies $\bm \alpha = \bm y$.
Also for all $\bm x \in L$, $\pi_{\bm x}(U_{\mathrm{lin}}) \subseteq \pi_{\bm x}(U)$ so that we are indeed reproducing all value assignments from linear functions of $U$. Then, by the Born rule,
  \begin{align}
    e_L^\rho (U_{\mathrm{lin}})
    &= \Tr\left(\rho \prod_{\bm x \in L} P_{\bm {\hat x}} \circ \pi_{\bm x}(U_{\mathrm{lin}})\right) \\
    &= \Tr\left(\rho \prod_{\bm x \in L} P \left( f_{\bm x}^{-1} \left( \pi_{\bm x}(U_{\mathrm{lin}})\right)\right) \right)\\
    &= \Tr\left(\rho  P\left(\bigcap_{\bm x \in L} f_{\bm x}^{-1}\left( \pi_{\bm x}(U_{\mathrm{lin}} \right)\right)\right) \\
    &= \Tr\left(\rho  P\left(\bigcap_{\bm x \in L} f_{\bm x}^{-1}\left(\pi_{\bm x}(U)\right)\right)\right) \\
    &= \Tr\left(\rho \prod_{\bm x \in L} P_{\bm {\hat x}}\left(\pi_{\bm x}(U)\right)\right)\\
    &= e_L^\rho (U).
  \end{align}
\hfill\qed

\noindent We are now in position to prove Proposition~1.

The sheaf-theoretic framework for contextuality describes value assignments as
  a sheaf $\mathscr{E}$ where
  $\mathscr{E}(U)$ is the set of value assignments for the measurement labels
  in $U$, which can be viewed as a set of functions $U \to \R$. For any Lagrangian $L
  \in \Mc$, there is a restriction map $\mathscr{E}(\Xc) \to
  \mathscr{E}(L) : f \mapsto f|_L$ that simply restricts the domain of any function from $\Xc$ to $L$. Then $\mathscr{E}(L)$ must coincide with
  the set of possible value assignments $\Oc_L$.

  By Lemma~1, $\mathscr{E}(L)$ consists in linear
  functions $L \to \R$ so that the set of global value assignments $\mathscr{E}(\Xc)$ contains only functions $\Xc
  \to \R$ whose restriction to any Lagrangian subspace is $\R$-linear. Then,
  following \cite[Lemma~1]{delfosse2017equivalence} (the Lemma is proven for the
  discrete phase-space $\Z_d^M \times \Z_d^M$ but its proof extends directly to
  $\R^M \times \R^M$), we conclude that if $M \geqslant 2$, $\mathscr{E}(\Xc)$
  contains only $\R$-linear functions $\Xc \to \R$, i.e. $\mathscr{E}(\Xc) =
  \Xc^*$, where $\Xc^*$ is the dual space of $\Xc$.
\qed

\medskip

\section*{Proof Theorem 1}
\noindent\textit{Theorem 1.}---For any \(M \geqslant 2\), a density operator \(\rho\) on \(L^2(\R^M)\) is noncontextual for quadrature measurements if and only if its Wigner function $W_\rho$ is both integrable and nonnegative.

\noindent \textit{Proof of Theorem~1.}---The proof proceeds by showing both directions of the equivalence. We make use of standard properties of the Wigner function, which are reviewed in the Supplemental Material.

($\implies$) 
It follows from the definition of the Wigner function that, for $\bm x \in \Xc$, $\Phi_\rho(\bm x) = \text{FT}^{-1}
\left[W_\rho \right](-J\bm x) = \text{FT} \left[W_\rho \right](J\bm x)$.
On the other hand, fix a noncontextual quantum empirical model $e^\rho$ satisfying the
Born rule associated to $\rho$ and the quadrature measurement scenario $\tuple{\Xc,\Mc,\bm \Oc}_{\rm quad}$. 
By Proposition~1, the set of global value assignments forms an $\R$-linear space of dimension $2M$, namely $\Xc^*$, which is isomorphic to its dual $\Xc$. Hence, by \cite[Theorem~1]{barbosa2019continuous} (restated in the Supplemental Material as Proposition~13) we
have a deterministic HVM \(\tuple{\bm\Xc, p, (k_C)_{C \in \Mc}}\) for \(e^\rho\) (see Definition~4), where $\bm\Xc= \tuple{\Xc,\Fc_\Xc}$ with $\Fc_\Xc$ the usual $\sigma$-algebra of $\Xc=\R^{2M}$. Moreover:
  \begin{align}
  \Phi_\rho(\bm x) &= \Tr\left(\hat D(-\bm x)\rho\right) \\
  &= \Tr\left(\rho \intg{\lambda\in\R}{e^{-i\lambda}}{P_{\hat{J\bm x}}(\lambda)}\right) \\
  &= \intg{\R} {e^{-i\lambda}} {p_{\hat{J\bm{x}}}^\rho(\lambda)} \\
  &= \intg{\Xc} {e^{-i J\bm x \cdot \bm y}}{p (\bm y)} \\
  &= \text{FT}[p](J\bm x),
    \end{align}
where the second line comes from the spectral theorem; the third line by
letting \(p_{\hat{J \bm x}}^\rho(E) = \Tr(P_{\hat{J\bm x}}(E)\rho)\) and
the fact that the integral and the trace may be inverted by the definition of
the integral with respect to the spectral measure \cite{hall2013quantum}; the
fourth line via the push-forward of measures; and the last line comes the
definition of the Fourier transform of a measure. 

Since the characteristic function \(\Phi_\rho\) is square-integrable, we can apply \cite[Lemma~1.1]{fournier_absolute_2010} (restated in the Supplemental Material as Lemma~7). Then, the measure \(p\) must have a density \(w_p
\in L^2(\R^M)\).
As a result, for all \(x \in \Xc\), $\text{FT}[w_p](\bm x) =
\text{FT}[p](\bm x) = \Phi_\rho(-J \bm x) = \text{FT}[W_\rho](\bm x)$
and since $w_p$ and $W_\rho$ are both in $L^2(\Xc)$ on which the Fourier
transform is unitary, it must hold that $w_p = W_\rho$ $\dd{\bm{x}}$-almost
everywhere. $w_p$ is the density of a probability measure, so it follows that
both functions must be almost everywhere nonnegative. Because the Wigner
function is a continuous function from $\Xc$ to $\R$ \cite{cahill1969density},
$W_\rho$ must be nonnegative.

($\impliedby$) Conversely, the Wigner function provides the correct marginals
for the quadratures and can be seen as a global probability density on phase
space when it is nonnegative. Via the equivalence demonstrated in the first
part of the proof (namely that the density of $p$ is almost everywhere the
Wigner function), the idea is to show that $\tuple{\bm \Xc,W_{\rho} \dd{\bm
  x},(\smash{\tilde k_L})_{L\in\Mc}}$ is a valid deterministic HVM (see Definition 4) that reproduces the
empirical predictions, where \(\tilde{k}_L(\bm x,U) \coloneqq (\sqrt{2\pi})^{-M} \delta_{(\bm
  x \cdot \dummy)|_L}(U)\).

For any \(\bm x \in \Xc\), there is a special orthogonal and symplectic
transformation \(S\) such that \(\bm x = \|\bm x\| S\bm e_1\). For any \(U \in
\Fc_{\bm{x}}\),
     \begin{align}
      e_{\bm{x}}^\rho(U) &= \Tr(P_{\hat{\bm{x}}}\circ\pi_{\bm{x}}(U)\rho) \\
      & = \frac{1}{(\sqrt{2\pi})^M} \intg{\|\bm x\|^{-1} \pi_{\bm{x}}(U) \times \R^{2M-1}}{W_\rho(S \bm{z})}{\bm{z}} \\
      & = \frac{1}{(\sqrt{2\pi})^M} \intg{(\bm e_1 \cdot -)^{-1}(\|\bm x\|^{-1} \cdot \pi_{\bm{x}}(U))}{\hspace{-2mm}W_\rho(S \bm{z})}{\bm{z}} \\
      & = \frac{1}{(\sqrt{2\pi})^M} \intg{(\|\bm x\| \bm e_1 \cdot S^{-1}-)^{-1}(\pi_{\bm{x}}(U))}{W_\rho(\bm{z})}{\bm{z}}  \label{eq:changeofvariablesJac}\\
      & = \frac{1}{(\sqrt{2\pi})^M} \intg{(\|\bm x\| S\bm e_1 \cdot -)^{-1}(\pi_{\bm{x}}(U))}{W_\rho(\bm{z})}{\bm{z}} \\
      & = \frac{1}{(\sqrt{2\pi})^M} \intg{(\bm x \cdot -)^{-1}(\pi_{\bm{x}}(U))}{W_\rho(\bm{z})}{\bm{z}},
   \end{align}
  where we have used the symplectic covariance of the Wigner function (restated as Lemma~9 in the Supplemental Material) and the fact that the Jacobian change of
  variable in \eqref{eq:changeofvariablesJac} is 1.
By definition of $\tilde k$, for all $\bm x\in\Xc$ and all $U\in
\Fc_{\bm{x}}$:
    \begin{align}
     &\intg{\bm z \in \Xc}{\tilde k_{\{\bm x\}}(\bm z,U) W_{\rho}(\bm z)}{\bm z} \\
     &\quad\quad= \frac{1}{(\sqrt{2\pi})^M} \intg{\bm z \in \Xc}{\delta_{(\bm z \cdot \dummy)|_{\{ \bm x \}}}(U) W_{\rho}(\bm z)}{\bm z}.
    \end{align}
$U\in\Fc_{\bm{x}}$ consists of functions \(\{\bm x\} \to \R\), which therefore extend to linear functions on the subspace generated by $\bm x$, so:
    \begin{align}
 \{\bm z \in \Xc \mid (\bm z \cdot \dummy)|_{\{\bm x\}} \in U\} 
 &= \{\bm z \in \Xc \mid (\bm z\cdot \bm x) \in \pi_{\bm x}(U)  \} \\
 &= (\dummy \cdot \bm x)^{-1}(\pi_{\bm x}(U)) \label{eq:cov}.
\end{align}
Thus:
    \begin{align}
    \intg{\bm z \in \Xc}{\hspace{-2mm}\tilde k_{\{\bm x\}}(\bm z,U) W_{\rho}(\bm z)}{ \bm z} 
    &= \frac{1}{(\sqrt{2\pi})^M} \intg{(\dummy \cdot \bm x)^{-1}(\pi_{\bm x}(U))}{ \hspace{-10mm}W_{\rho}(\bm z) }{\bm z}\\
    &= e_{\bm{x}}^\rho(U),
\end{align}
 While we have verified the calculation only for \(e_{\bm x}^\rho\)(U) for $U\in \Fc_{\bm x}$, the same computation can be carried out for $e^\rho_L(U)$ for a Lagrangian subspace $L$ and $U \in \Fc_L$ to retrieve the joint probability distributions from the HVM $\tuple{\bm \Xc,W_{\rho} \dd{\bm x},(\smash{\tilde k_L})_{L\in\Mc}}$, using Proposition~1 to restrict the elements of $U$ to linear functions.
 \qed

%

\pagebreak 
\appendix

\section*{Appendices}

 \noindent Here we give some background and discussion for the results stated in the main text. It is structured as follows:
 \begin{itemize}
 \item tools of measure theory;
 \item Wigner functions and the symplectic phase space;
 \item continuous-variable contextuality;
 \item "measuring" a displacement operator.
 \end{itemize}

\section{Tools of measure theory}
\label{app:measure}

\noindent In this appendix, we briefly introduce the necessary tools of measure theory. We refer the reader to~\cite{billingsley2008probability,tao2011introduction} for a more in-depth treatment. 

\medskip

To avoid some pathological behaviours when dealing with probability distributions on a continuum, we first need to define $\sigma$-algebras which give rise to a good notion of measurability. 
\begin{definition}[$\sigma$-algebras]
A $\sigma$-algebra on a set $U$ is a family $\Bc$ of subsets of U containing the empty set and closed under complementation and countable unions, that is:
\begin{enumerate}[label=(\roman*)]
    \item $\emptyset \in \Bc$.
    \item for all $E \in \Bc$, $E^c \in \Bc$.
    \item for all $E_1,E_2, \dots \in \Bc$, $\cup_{i=1}^\infty E_i \in \Bc$.
\end{enumerate}
\end{definition}

\begin{definition}[Measurable space]
A measurable space is a pair $\bm U = \tuple{U,\Fc_U}$ consisting of a set $U$ and a $\sigma$-algebra $\Fc_U$ on $U$.
\end{definition}
\noindent In some sense, the $\sigma$-algebra specifies the subsets of $U$ that can be assigned a `size', and which are therefore called the \emph{measurable sets} of $\bm U$. We use boldface to refer to measurable spaces and regular font to refer to the underlying set. 
A trivial example of a $\sigma$-algebra over any set $U$ is its powerset $\Pc(U)$, which gives the discrete measurable space $\tuple{U,\Pc(U)}$, where every set is measurable.
This is typically used when $U$ is countable (finite or countably infinite), in which case this discrete $\sigma$-algebra is generated by the singletons.
Another example, of central importance in measure theory, is the Borel $\sigma$-algebra $\Bc_\R$ generated from the open sets of $\R$, whose elements are called the Borel sets. 
This gives the measurable space $\tuple{\R,\Bc_\R}$.
Working with Borel sets avoids the problems that would arise if we naively attempted to measure or assign probabilities to points in the continuum.

We also need to deal with measurable spaces formed by taking the product of an uncountably infinite family of measurable spaces. 
As enlightened by Tychonoff's theorem \cite{tychonoff1930topologische}, we use the \textit{product $\sigma$-algebra}. 
\begin{definition}[Infinite product]\index{Measurable!space!infinite product of|textbf}
Fix a possibly uncountably infinite index set $I$.
The product of measurable spaces $(\bm U_i = \tuple{U_i,\Fc_i})_{i \in I}$ is the measurable space:
\[\prod_{i \in I} \bm U_i = \tuple{\prod_{i \in I} U_i, \prod_{i \in I} \Fc_i} = \tuple{U_I,\Fc_I} \Mcomma\]
where $U_I = \prod_{i \in I} U_i$ is the Cartesian product of the underlying sets, and the $\sigma$-algebra $\Fc_I = \prod_{i \in I} \Fc_i$ is obtained via the product construction \ie it is generated by subsets of $\prod_{i \in I} U_i$ of the form $\prod_{i \in I} B_i$ where for all $i \in I$, $B_i \subseteq U_i$ and $B_i \subsetneq U_i$ only for a finite number of $i \in I$.
\label{def:productmeasurableinfinite}
\end{definition}
Remarkably, the product topology is the smallest topology that makes the projection maps $\fdec{\pi_k}{\prod_{i \in I} U_i }{U_k}$ measurable.
This definition reduces straightforwardly to the case of a finite product.
We can now formally define measurable functions and measures on those spaces. 
\begin{definition}[Measurable function]\index{Measurable!function|textbf}
A measurable function between measurable spaces $\bm U = \tuple{U,\Fc_U}$ and $\bm V = \tuple{V, \Fc_V}$ is a function $\fdec{f}{U}{V}$ whose preimage preserves measurable sets, \ie such that, for any $E \in \Fc_V$, ${f^{-1}(E) \in \Fc_U}$.
\end{definition}
\noindent This is similar to the definition of a continuous function between topological spaces.
Measurable functions compose as expected.

\begin{definition}[Measure]
\label{def:measure}
A measure on a measurable space $\bm U = \tuple{U,\Fc_U}$ is a function $\fdec{\mu}{\Fc_U}{\Rext}$ from the $\sigma$-algebra to the extended real numbers $\Rext = \R \cup \enset{-\infty,+\infty}$ satisfying:
\begin{enumerate}[label=(\roman*)]
\item\label{enum:nonnegativity} (nonnegativity)
$\mu(E)\geq 0$ for all $E\in\Fc_U$;
\item (null empty set)
$\mu(\emptyset)=0$;
\item ($\sigma$-additivity)
for any countable family $\family{E_i}_{i=1}^\infty$ of pairwise disjoint measurable sets, $\mu(\bigcup_{i=1}^\infty E_i) = \sum_{i=1}^\infty \mu(E_i)$.
\end{enumerate}
\end{definition}
\noindent In particular, a measure $\mu$ on $\bm U = \tuple{U,\Fc_U}$ allows one to integrate well-behaved measurable functions $\fdec{f}{U}{\R}$ (where $\R$ is equipped with its Borel $\sigma$-algebra $\Bc_\R$) to obtain a real value, denoted
\begin{equation}
    \intg{\bm U}{f}{\mu} \; \text{ or } \; \intg{x \in U}{f(x)}{\mu(x)}.
\end{equation}
A simple example of a measurable function is the \emph{indicator function} of a measurable set $E \in \Fc_U$:
\[ \bm 1{_E}(x) \defeq \begin{cases} 1 & \text{if $x \in E$} \\ 0 & \text{if $x \not\in E$.}\end{cases}\]
For any measure $\mu$ on $\bm U$, its integral yields the `size' of E:
\begin{equation}\label{eq:integralindicator}
\intg{\bm U}{\bm 1_{E}}{\mu} = \mu(E).
\end{equation}
A measure $\mu$ on a measurable space $\bm U$ is said to be \emph{finite} if $\mu(U)<\infty$ and it is a \emph{probability measure} if it is of unit mass \ie $\mu(U)=1$.

A measurable function $f$ between measurable spaces $\bm U$ and $ \bm V$ carries any measure $\mu$ on $\bm U$ to a measure $f_*\mu$ on $\bm V$. 
This is known as a \emph{push-forward} operation.
This push-forward measure is given by $f_*\mu(E) = \mu(f^{-1}(E))$ for any set $E$ measurable in $\bm V$.
An important use of push-forward measures is that for any integrable function $g$ between measurable spaces $\bm V$ and $\tuple{\R,\Bc_\R}$, one can write the following change-of-variables formula:
\begin{equation}\label{eq:changeofvariables}
\intg{\bm U}{g \circ f}{\mu} = \intg{\bm V}{g}{f_*\mu}.
\end{equation}
A case that is of particular interest to us is the push-forward of a measure $\mu$ on a product space $\bm U_1 \times \bm U_2$ along a projection  $\fdec{\pi_i}{ U_1 \times  U_2}{U_i}$. This yields the \emph{marginal measure} $\mu|_{\bm U_i}={\pi_i}_*\mu$, where for any $E \in \Fc_{U_1}$ measurable,  $\mu|_{\bm U_1}(E) = \mu(\pi_1^{-1}(E)) = \mu(E \times U_2)$.
In the opposite direction, given a measure $\mu_1$ on $\bm U_1$ and a measure $\mu_2$ on  $\bm U_2$, a \emph{product measure} $\mu_1 \times \mu_2$ is a measure on the product measurable space $\bm U_1 \times \bm U_2$ satisfying $(\mu_1 \times \mu_2)(E_1 \times E_2) = \mu_1(E_1)\mu_2(E_2)$ for all $E_1 \in \Fc_1$ and $E_2 \in \Fc_2$. For probability measures, there is a uniquely determined product measure.

The last ingredient we need from measure theory is called a \emph{Markov kernel}. 
\begin{definition}[Markov kernel]
A Markov kernel (or probability kernel) between measurable spaces $\bm U = \tuple{U,\Fc_U}$
and $\bm V = \tuple{V, \Fc_V}$ is a function
$\fdec{k}{U \times \Fc_V}{[0,1]}$ (the space $[0,1]$ is assumed to be equipped with its Borel $\sigma$-algebra)
such that:
\begin{enumerate}[label=(\roman*)]
    \item for all $E \in \Fc_V$, $\fdec{k(\dummy,E)}{U}{[0,1]}$ is a measurable function;
    \item for all $x \in U$, $\fdec{k(x,\dummy)}{\Fc_V}{[0,1]}$ is a probability measure.
\end{enumerate}
\end{definition}
\noindent Markov kernels generalise the discrete notion of stochastic matrices.

Finally, we need the following result, which allows us control over a probability measure via its Fourier transform:
\begin{lemma}
  \label{lem:density_L2}
  Let $M\ge1$, let \(\mu\) be a Borel probability measure on $\R^{2M}$ and define its
  Fourier transform by \(\operatorname{FT}[\mu](\bm{x}) = \int_{\R^{2M}}e^{i\bm{x}
    \cdot \bm{y}} \dd{\mu(\bm{y})}\) for \(\bm x \in\R^{2M}\). If
  \(\operatorname{FT}[\mu] \in L^2(\R^{2M})\) then \(\mu\) has a density function \(w_\mu \in
  L^1(\R^{2M}) \cap L^2(\R^{2M})\).
\end{lemma}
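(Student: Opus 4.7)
The plan is to use Plancherel's theorem to construct a candidate density $w_\mu \in L^2$ from the $L^2$ Fourier transform, and then to identify $w_\mu$ with the density of $\mu$ via a Gaussian mollification argument. Since $\mu$ is a probability measure, $|\operatorname{FT}[\mu](\bm x)| \leq 1$ for every $\bm x$, so $\operatorname{FT}[\mu] \in L^\infty(\R^{2M})$; combined with the hypothesis $\operatorname{FT}[\mu] \in L^2(\R^{2M})$, Plancherel's theorem lets us define $w_\mu \defeq \operatorname{FT}^{-1}\!\left[\operatorname{FT}[\mu]\right] \in L^2(\R^{2M})$. It then remains to show that $w_\mu$ is almost everywhere nonnegative and that $\mu(E) = \int_E w_\mu \dd{\bm x}$ for every Borel set $E \subseteq \R^{2M}$.

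For this, introduce the Gaussian mollifier $\varphi_\varepsilon(\bm x) \defeq (2\pi\varepsilon)^{-M} e^{-\norm{\bm x}^2/(2\varepsilon)}$, whose Fourier transform $\operatorname{FT}[\varphi_\varepsilon](\bm x) = e^{-\varepsilon \norm{\bm x}^2/2}$ is a Schwartz function bounded by $1$ and converging pointwise to $1$ as $\varepsilon \to 0$. Define $\rho_\varepsilon(\bm x) \defeq (\mu * \varphi_\varepsilon)(\bm x) = \int \varphi_\varepsilon(\bm x - \bm y) \dd{\mu(\bm y)}$, which is smooth, nonnegative, and in $L^1$ with $\int \rho_\varepsilon \dd{\bm x} = 1$; its Fourier transform factorises as $\operatorname{FT}[\rho_\varepsilon] = \operatorname{FT}[\mu] \cdot \operatorname{FT}[\varphi_\varepsilon] \in L^1 \cap L^2$. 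As $\varepsilon \to 0$, dominated convergence (with envelope $|\operatorname{FT}[\mu]| \in L^2$) yields $\operatorname{FT}[\mu] \cdot \operatorname{FT}[\varphi_\varepsilon] \to \operatorname{FT}[\mu]$ in $L^2$, and hence by Plancherel $\rho_\varepsilon \to w_\mu$ in $L^2$, with almost everywhere convergence along a subsequence. Since $\rho_\varepsilon \geq 0$, we conclude $w_\mu \geq 0$ almost everywhere.

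Independently, since the characteristic functions $\operatorname{FT}[\rho_\varepsilon]$ converge pointwise to $\operatorname{FT}[\mu]$, L\'evy's continuity theorem implies that $\rho_\varepsilon \dd{\bm x}$ converges weakly to $\mu$ as probability measures. For any compactly supported continuous $f$, combining this weak convergence with the $L^2$ convergence of $\rho_\varepsilon$ on the support of $f$ yields $\int f \dd{\mu} = \int f \, w_\mu \dd{\bm x}$; a standard monotone-class argument then extends this equality to all Borel sets, so $\mu$ admits $w_\mu$ as its Radon--Nikodym density, giving $\norm{w_\mu}_1 = \mu(\R^{2M}) = 1$ and hence $w_\mu \in L^1 \cap L^2$. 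The main technical obstacle is precisely this reconciliation of two distinct modes of convergence for $\rho_\varepsilon$---weak as a measure and $L^2$ as a function---on a separating class of test functions; the factorisation of $\operatorname{FT}[\rho_\varepsilon]$ through the mollifier is what makes both available simultaneously.
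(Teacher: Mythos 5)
Your proof is correct, and while it shares the paper's basic strategy---convolve $\mu$ with Gaussians and exploit Plancherel---the way you produce the density is genuinely different and, in fact, more direct. The paper (following Fournier--Printems) only derives a uniform $L^2$ bound $\|f_n\|_2 \leqslant C$ on the mollified densities, then invokes the Banach--Alaoglu theorem to extract a subsequence converging \emph{weakly} in $L^2$ to some $f$, and finally identifies $f\,\dd{\bm x}$ with $\mu$ by uniqueness of vague limits. You instead observe that $\operatorname{FT}[\mu]\in L^2$ already names the density explicitly as $w_\mu=\operatorname{FT}^{-1}[\operatorname{FT}[\mu]]$, and dominated convergence on the Fourier side upgrades the convergence of the mollifications $\rho_\varepsilon\to w_\mu$ to \emph{strong} $L^2$ convergence, with no compactness or subsequence extraction needed. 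This buys you two things the paper's route does not give for free: almost-everywhere nonnegativity of $w_\mu$ drops out immediately from $\rho_\varepsilon\geq 0$ along an a.e.-convergent subsequence (the paper gets it only a posteriori from $\mu=f\,\dd{\bm x}$ being a positive measure), and the integrability $w_\mu\in L^1$ follows cleanly once the two measures are matched on $C_c$ test functions. The identification step is essentially the same in both arguments: pointwise convergence of the characteristic functions gives weak convergence $\rho_\varepsilon\,\dd{\bm x}\to\mu$ (L\'evy continuity, which the paper phrases as strict hence vague convergence), and testing both limits against compactly supported continuous functions forces $\mu=w_\mu\,\dd{\bm x}$. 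One small point of care, which you handle correctly: the pairing $\int f\rho_\varepsilon\,\dd{\bm x}\to\int f w_\mu\,\dd{\bm x}$ requires $f\in L^2$, which is why restricting to compactly supported continuous test functions---a class that is both $L^2$ and separating for Radon measures---is exactly the right move.
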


\begin{proof}
  We mostly follow the proof of \cite[Lem. 1.1]{fournier_absolute_2010}, though
  adapted to our setting and conventions.

  For \(n \in \N^*\), consider the convolution \(\mu_n \coloneqq \mu \star g_n\)
  with a centered Gaussian distribution \(g_n\), i.e.\ with density function
  \((\sfrac{n}{2\pi})^{d/2} \exp(-\sfrac{n\|\bm x\|^2}{2})\), and in particular
  \(|\text{FT}[\mu_n](\bm x)| = \exp(-\sfrac{\|\bm x\|^2}{2n}) |\text{FT}[\mu](\bm
  x)| \leqslant |\text{FT}[\mu](\bm x)|\) for all \(\bm x \in\R^{2M}\). For any \(n
  \in \N^*\), \(\mu_n\) has a density \(f_n \in L^1(\R^{2M}) \cap L^\infty(\R^{2M})\)),
  so we can apply the Plancherel theorem to obtain:
  \begin{equation}
    \int_{\R^{2M}} f_n(\bm x)^2 \,\dd{\bm x}
    = \frac{1}{(\sqrt{2\pi})^M} \int_{\R^{2M}} |\text{FT}[\mu_n](\bm x)|^2 \,\dd{\bm x}
    \leqslant \frac{1}{(\sqrt{2\pi})^M} \int_{\R^{2M}} |\text{FT}[\mu](\bm x)|^2 \,\dd{\bm x}
    = C < \infty.
  \end{equation}
  By the Banach--Alaoglu theorem, the (closed) \(C\)-ball of \(L^2(\R^{2M})\) is
  sequentially compact in the weak topology, so that we can extract a
  subsequence \(n_k\) and some \(f \in L^2(\R^{2M})\) for which \(f_{n_k} \to f\)
  weakly. In particular, this implies that for any \(g \subseteq L^2(\R^{2M})\),
  and in particular any continuous function with compact support,
  \begin{equation}
    \int_{\R^{2M}} g(\bm x) \,\dd{\mu_{n_k}(\bm x)}
    = \int_{\R^{2M}} g(\bm x) f_{n_k}(\bm x) \,\dd{\bm x}
    \longrightarrow \int_{\R^{2M}} g(\bm x) f(\bm x) \,\dd{\bm x},
  \end{equation}
  i.e.\ \(\mu_n\) converges vaguely to the Radon measure \(f \dd{\bm x}\).
  Furthermore, for any compact \(E \subseteq \R^{2M}\) and \(n \in \N^*\), we
  have \(\mu_n(E) \leqslant 1\). Then, we must have \(\int_E f \dd{\bm x}
  \leqslant 1\), and \(f\dd{x}\) must be a bounded measure, which in turn is
  possible only if \(f \in L^1(\R^{2M})\).

  On the other hand, for all \(\bm x \in \R^{2M}\), \(\text{FT}[\mu_n] \to
  \text{FT}[\mu]\), so that \(\mu_n\) converges strictly, thus vaguely, to
  \(\mu\). Then, by the uniqueness of vague limits of sequences of bounded
  measures, \(\mu = f \dd{\bm x}\).
\end{proof}

\section{Wigner functions and the symplectic phase space}
\label{sec:wigner}

\noindent In this section, we recall some definitions from the main text and introduce additional preliminary material relating to phase space and Wigner functions.

\medskip

We fix $M \in \N^*$ to be the number of qumodes, that is, $M$
CV quantum systems. For a single qumode, the corresponding state
space is the Hilbert space of square-integrable functions \(L^2(\R)\) and the total Hilbert space for all \(M\) qumodes is then
\(L^2(\R)^{\otimes M} \cong L^2(\R^{M})\).

To each qumode, we associate the
usual position and momentum operators. These are defined on the dense
subspace of Schwartz functions (functions whose derivatives decrease faster than
any polynomial at infinity) by:
\begin{equation}
    \hat{q} \psi(x) \defeq x \psi(x) \quad \text{and} \quad \hat{p} \psi(x) \defeq -i \frac {\partial \psi(x)}{\partial x} \, ,
\end{equation}
where we have set $\hbar=1$. We write $\hat q_k$ and $\hat p_k$ to denote the position and momentum operators of the $k^{th}$ qumode, and we extend the definition above by linearity to any linear combination thereof. Any \(\R\)-linear combination of these operators is called a \emph{quadrature}. 

\subsection*{Phase space} 

\noindent The Wigner representation of a quantum state in the Hilbert space \(L^2(\R^{M})\) is a function defined on the \emph{phase space} \(\R^{2M}\), which can be intuitively understood as a quantum version of the position and momentum phase space of a classical particle. 
We equip this phase space with a symplectic form denoted ${\Omega}$: for $\bm
x,\bm y \in \R^{2M}$,
\begin{equation}
    \label{eq:symplecticform}
    \Omega(\bm x,\bm y) := \bm x \cdot J \bm y \quad \text{where} \quad J = \begin{pmatrix}
    0 & \Id_M \\
    - \Id_M & 0 
    \end{pmatrix},
\end{equation}
in a given basis \((\bm{e}_k,\bm{f}_k)_{k=1}^{M}\) of \(\R^{2M}\), which is
therefore a symplectic basis for the phase space.
We have that $J^{-1} = J^T = -J$ and $J$ can be seen as a linear map from $\R^{2M}$ to  $\R^{2M}$.  We also equip $\R^{2M}$ with its usual scalar product denoted by $\dummy \cdot \dummy$.

A \emph{Lagrangian vector subspace} is defined as a maximal isotropic subspace, that is, a maximal subspace on which the symplectic form $\Omega$ vanishes. For a symplectic space of dimension $2M$, Lagrangian subspaces are of dimension $M$. See \cite{Sudarshan1988} for a concise introduction to the symplectic structure of phase space and \cite{Gosson2006symplectic} for a detailed review.

The importance of the symplectic phase space $\R^{2M}$ comes from its relation
to the position and momentum quadrature operators. To any \(\bm{x} \in \R^{2M}\) we
associate a quadrature operator as follows. Assume w.l.o.g.\ that \(\bm{x} =
\sum_{k}q_k \bm{e}_k + \sum_{k}p_k \bm{f}_k\), and put
\begin{equation}
  \label{eq:generalised_quadrature}
  \hat{\bm{x}} = \sum_{k=1}^M q_k \hat{q}_k + \sum_{k=1}^{M} p_k \hat{p}_k,
\end{equation}
where the indices indicate on which qumode each operator acts. Then, it is
straightforward to verify, using the canonical commutation relations, that
\begin{equation}
  \label{eq:commutation_relation}
    [\hat{\bm{x}},\hat{\bm{y}}] = i\Omega(\bm{x},\bm{y}) \hat{\Id},
\end{equation}
\ie the symplectic structure encodes the commutation relations of quadrature operators.

The elements of \(\R^{2M}\) can also be associated to translations in phase
space. 
Firstly, for any \(s \in \R^{M}\), define the \emph{Weyl operators}, acting on
\(L^2(\R^{M})\), by
\begin{equation}
    \hat X(s)\psi(t) = \psi(t-s) \quad \text{and} \quad \hat Z(s)\psi(t) = e^{ist} \psi(t),
\end{equation}
for all $t\in\mathbb R^M$.
Then, define the \emph{displacement operator} for any \(\bm x = (q,p) \in \R^M
\times \R^M\) in the symplectic basis \((\bm{e}_k,\bm{f}_k)_{k=1}^{M}\) by
\begin{equation}
    \hat D(\bm{x}) = e^{-i\frac{q \cdot p}{2}}\hat X(q)\hat Z(p),
\end{equation}
so that
\begin{equation}
    [\hat D(\bm{x}),\hat D(\bm{y})] = e^{i \Omega(\bm{x},\bm{y})} \hat{\Id}.
\end{equation}

A quadrature operator such as the position operator $\hat q$ is self-adjoint and
can be expanded via the spectral theorem \cite[Th. 8.10]{hall2013quantum} as:
\begin{equation}
    \hat q = \intg{\R}{x}{P_{\hat q}(x)} \, ,
\end{equation}
where $P_{\hat q}$ is the projection-valued measure (PVM) associated to $\hat
q$. For any Lebesgue-measurable set \(E \subseteq \R\), $P_{\hat q}(E)$ is given
by \cite[Def. 8.8]{hall2013quantum}:
\begin{equation}
    P_{\hat q}(E) \psi(x) =
    \begin{cases}
      \psi(x) \quad \text{if} \quad x \in E; \\
      0 \quad \text{otherwise}.
    \end{cases}
    \label{eq:position_pvm}
\end{equation}
We can view $P_{\hat q}(E)$ as the formal version of a projector $\int_{E}\ket
  x\!\bra x \dd{x}$, with $\ket x$ a (non-normalisable) eigenvector of $\hat
q$ for the eigenvalue $x\in\mathbb R$. In other words, it is the projector onto
the subspace of states for which the position is contained in \(E\). In general,
\(P_{\hat{\bm x}}\) denotes the PVM associated to an arbitrary quadrature
\(\hat{\bm x}\).

Any linear map \(S\) on \(\R^{2M}\) such that \(\Omega(S\bm{x},S\bm{y})
  = \Omega(\bm{x},\bm{y})\) is called a \emph{symplectic transformation.} We use the following standard result:

\begin{lemma}
  \label{lemma:symplectic_representation}
 To any
  symplectic transformation $S$ we can associate a unitary operator \(\tau(S)\) such
  that
  \begin{equation}\label{mudef}
    \tau(S)^*\hat{\bm{x}}\tau(S) = \hat{\bm y},
  \end{equation}
  for all $\bm x\in\R^{2M}$, where $\bm y=S\bm x$.
  Furthermore, we have that
  \begin{equation}\label{Ptau}
    \hat D(\bm{y})^*P_{\hat{\bm{x}}}(E)\hat D(\bm{y}) = P_{\hat{\bm{x}}+\hat{\bm y}}(E) 
    \quad \text{and} \quad
    \tau^*(S)P_{\hat{\bm{x}}}(E) \tau(S) = P_{S\hat{\bm{x}}}(E).
  \end{equation}
\end{lemma}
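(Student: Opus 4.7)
The plan is to establish two essentially independent ingredients: the existence of the unitary $\tau(S)$ realising the intertwining \eqref{mudef}, and the two covariance identities \eqref{Ptau}. For the existence of $\tau(S)$ my strategy is to invoke the classical metaplectic representation of $\operatorname{Sp}(2M,\R)$ on $L^2(\R^M)$, as developed in \cite{hall2013quantum,Gosson2006symplectic,de_gosson_symplectic_2011}. Concretely, any $S \in \operatorname{Sp}(2M,\R)$ decomposes (via, e.g., the pre-Iwasawa or Euler decomposition) into a finite product of a few elementary symplectic transformations, each of which admits an explicit Gaussian unitary representative: phase-space rotations in the $(q_k,p_k)$ plane are generated by the number operator $\tfrac12(\hat q_k^2 + \hat p_k^2)$, single-mode squeezings by $\tfrac12(\hat q_k \hat p_k + \hat p_k \hat q_k)$, and mode-mixing couplings by products such as $\hat q_k \hat q_\ell$. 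For each generator a direct calculation with the canonical commutation relations \eqref{eq:commutation_relation} verifies \eqref{mudef}, and composition extends the correspondence to all of $\operatorname{Sp}(2M,\R)$.

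For the displacement covariance, the definitions of $\hat D(\bm y)$ and $\hat{\bm x}$ together with \eqref{eq:commutation_relation} reduce the computation to a Baker--Campbell--Hausdorff identity: since $[\hat{\bm x},\hat{\bm y}] = i\Omega(\bm x,\bm y)\hat{\Id}$ is a c-number, all higher-order nested commutators vanish and one obtains
\[
  \hat D(\bm y)^* \hat{\bm x}\, \hat D(\bm y) \;=\; \hat{\bm x} + c(\bm x,\bm y)\,\hat{\Id},
\]
for an explicit scalar $c(\bm x,\bm y)$ (writing $\hat D(\bm y)$ as the exponential of a single quadrature and using $J^T J = \Id$ one finds $c(\bm x,\bm y) = \bm x\cdot\bm y$). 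The spectral theorem then promotes this scalar translation of the self-adjoint operator $\hat{\bm x}$ to the stated translation of its PVM, which is the first identity in \eqref{Ptau}. The second identity follows immediately from \eqref{mudef} by the functional calculus: for any bounded Borel $f\colon\R\to\C$ we have $\tau(S)^* f(\hat{\bm x})\tau(S) = f(\widehat{S\bm x})$, and specialising to $f = \mathbf{1}_E$ yields $\tau(S)^* P_{\hat{\bm x}}(E)\tau(S) = P_{\widehat{S\bm x}}(E)$.

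The only genuine technical difficulty is the existence of $\tau(S)$: the metaplectic representation is a \emph{double} cover of $\operatorname{Sp}(2M,\R)$, so $\tau$ is only defined up to a sign, and a fully rigorous construction typically invokes either harmonic analysis on the Heisenberg group or the Stone--von Neumann uniqueness theorem. However, the lemma only asserts the existence of some unitary realising the intertwining for each fixed $S$, not a globally defined group homomorphism, so the sign ambiguity is irrelevant and the explicit generator-by-generator construction above is sufficient.
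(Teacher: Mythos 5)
The paper does not actually prove this lemma---it simply points to \cite{de_gosson_wigner_2017}---so there is no in-paper argument to match; your proposal supplies the standard self-contained proof, and it is essentially correct. The two halves are handled the right way: the metaplectic representative $\tau(S)$ via a decomposition of $S$ into elementary symplectic maps with explicit Gaussian-unitary generators (one should, strictly speaking, note that the quadratic Hamiltonians $\tfrac12(\hat q_k^2+\hat p_k^2)$, $\tfrac12(\hat q_k\hat p_k+\hat p_k\hat q_k)$, $\hat q_k\hat q_\ell$ are essentially self-adjoint so that the exponentials are well-defined unitaries, but this is standard and covered by the references you cite), and the covariance identities via BCH plus the unitary-conjugation property of the Borel functional calculus, $\tau^* f(\hat{\bm x})\tau = f(\tau^*\hat{\bm x}\tau)$. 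You are also right that only existence for each fixed $S$ is needed, so the double-cover sign ambiguity is harmless. One point worth making explicit: your computation gives $\hat D(\bm y)^*\hat{\bm x}\hat D(\bm y)=\hat{\bm x}+(\bm x\cdot\bm y)\hat{\Id}$, a \emph{c-number} shift, so what conjugation by $\hat D(\bm y)$ actually produces is $P_{\hat{\bm x}}(E-\bm x\cdot\bm y)$; the notation $P_{\hat{\bm x}+\hat{\bm y}}(E)$ in \eqref{Ptau}, read literally as the PVM of the operator sum $\widehat{\bm x+\bm y}$, is not what one obtains, and your derivation in effect corrects (or at least disambiguates) the statement rather than deviating from a correct one. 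It would strengthen the write-up to flag that discrepancy explicitly, but the mathematics as you present it is sound.
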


\noindent See for example \cite{de_gosson_wigner_2017} for a proof.


\subsection*{Wigner functions}

\noindent There are several equivalent ways of defining the Wigner function of a quantum state
\cite{ferraro2005gaussian,cahill1969density, de_gosson_wigner_2017}. We follow
the conventions adopted in \cite{de_gosson_symplectic_2011} (see in particular
Prop.\ 175). The \emph{characteristic function} \(\Phi_\rho : \R^{2M} \to \C\) of
a density operator \(\rho\) on \(L^2(\R^{M})\) is defined as
\begin{equation}
    \Phi_\rho(\bm{x}) \defeq \Tr(\rho \hat D(-\bm{x})).
    \label{eq:characfunction}
\end{equation}
The \emph{Wigner function} \(W_\rho\) of \(\rho\) is then defined as the symplectic Fourier transform of the  characteristic function of $\rho$:
\begin{equation}
    W_\rho(\bm{x}) \defeq \operatorname{FT}[\Phi_\rho](J\bm{x}).
    \label{eq:WignerCharacFunction}
\end{equation}

\noindent The Wigner function is a real-valued square-integrable function on \(\R^{2M}\).
Arguably, its key property is that one can recover the probabilities for
quadrature measurements from its marginals \cite[proposition
6.43]{Gosson2006symplectic}. If \(W\) is the Wigner function of a pure state
\(\psi \in L^2(\R^{M})\) such that \(W\) is integrable on \(\R^{2M}\), then once
again identifying \(\bm{x}\) with \((q,p)\in\R^M\times\R^M\) in the same basis \((\bm
e_k, \bm f_k)_{k=1}^M\) as before,
\begin{align}
    \frac{1}{\sqrt{2 \pi}^M} \intg{\R^M}{W(q,p)}{p} &= |\psi(q)|^2, \\
    \frac{1}{\sqrt{2 \pi}^M}  \intg{\R^M}{W(q,p)}{ q} &= \left|\operatorname{FT}[\psi](p)\right|^2.
\end{align}
It follows that for any Lebesgue-measurable \(E \subseteq \R\), the probability
of obtaining an outcome $q$ in \(E\) when measuring the position $\hat q_1$ of the first qumode is given by
\begin{equation}
  \label{PW2}
  \mathrm{Prob}[q \in E | \rho]
  = \Tr(P_{\hat{q}_1}(E)\rho)
  = \frac{1}{(\sqrt{2\pi})^M} \smashoperator{\int_{E \times \R^{2M-1}}} \, {W_\rho(\bm{y})} \dd{\bm{y}}.\hspace{-1mm}
\end{equation}
In general, if \(\bm x \in \R^{2M}\) describes an arbitrary quadrature, one can show
that the probability of obtaining an outcome $x$ in \(E\subseteq \R\) when measuring the generalised quadrature $\hat{\bm x}$ is
\begin{equation}
  \label{eq:quadrature_probabilities}
  \mathrm{Prob}[x \in E | \rho]
  = \Tr(P_{\hat{\bm{x}}}(E)\rho)
  = \frac{1}{(\sqrt{2\pi})^M} \int_{A} {W_\rho(\bm{y})}\dd{\bm{y}},
\end{equation}
where \(A = \left\{\bm y \in \R^{2M} \mid \bm y \cdot \bm x \in E\right\}\). This corresponds to marginalising the Wigner function over the axes orthogonal to \(\bm{x}\).



Additionally, we make use of the symplectic covariance of the Wigner function:

\begin{lemma}
  \label{lemma:symplectic_covariance}
  Let \(\rho\) be a density operator on \(L^2(\R^M)\) and $S$ a symplectic
  transformation on \(\Xc\). Let \(\tau\) be as in
  Lemma~\ref{lemma:symplectic_representation}, then
  \(W_{\tau(S)\rho\tau^*(S)}(\bm x) = W_\rho(S\bm x)\).
\end{lemma}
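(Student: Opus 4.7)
My plan is to trace how each ingredient of the definition $W_\rho(\bm x)=\operatorname{FT}[\Phi_\rho](J\bm x)$ transforms when $\rho$ is replaced by $\tau(S)\rho\tau^*(S)$. The essential input I will need is the covariance of the displacement operator under metaplectic conjugation, namely
\begin{equation}
\tau^*(S)\,\hat D(\bm y)\,\tau(S) \;=\; \hat D(S\bm y),
\end{equation}
which I would derive directly from Lemma~\ref{lemma:symplectic_representation}: since $\hat D(\bm y)$ is, up to a scalar prefactor, the exponential of a quadrature operator, the intertwining relation $\tau^*(S)\hat{\bm z}\tau(S)=\widehat{S\bm z}$ lifts to the analogous statement for $\hat D$, in exact parallel with how the PVM identity in \eqref{Ptau} is obtained by functional calculus. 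Controlling the global phase of $\hat D$ cleanly is the only subtlety, but this phase is harmless inside the trace against $\rho$.

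With this covariance in hand, cyclicity of the trace immediately gives
\begin{equation}
\Phi_{\tau(S)\rho\tau^*(S)}(\bm x) \;=\; \Tr\!\bigl(\rho\,\tau^*(S)\hat D(-\bm x)\tau(S)\bigr) \;=\; \Tr\!\bigl(\rho\,\hat D(-S\bm x)\bigr) \;=\; \Phi_\rho(S\bm x),
\end{equation}
so the characteristic function of the transformed state is simply $\Phi_\rho$ precomposed with $S$. Substituting this into the definition of the Wigner function and performing the affine change of variables $\bm u=S\bm z$ inside the Fourier integral---which is valid with unit Jacobian because every real symplectic matrix satisfies $|\det S|=1$---yields
\begin{equation}
W_{\tau(S)\rho\tau^*(S)}(\bm x) \;=\; \operatorname{FT}[\Phi_\rho\circ S](J\bm x) \;=\; \operatorname{FT}[\Phi_\rho]\!\bigl(S^{-T}J\bm x\bigr).
\end{equation}

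The conclusion is then pure linear algebra: the defining identity $S^TJS=J$ of a symplectic matrix rearranges to $S^{-T}J=JS$, hence $\operatorname{FT}[\Phi_\rho](S^{-T}J\bm x)=\operatorname{FT}[\Phi_\rho](JS\bm x)=W_\rho(S\bm x)$, which is the claim. The only genuine obstacle is the first step, namely establishing the covariance of $\hat D$ under conjugation by $\tau(S)$ while correctly accounting for the projective global phase of the metaplectic representation; everything afterwards is a routine Fourier change of variables combined with the symplectic identity.
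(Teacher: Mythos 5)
Your proposal is correct and follows essentially the same route as the paper: covariance of the displacement operator under conjugation by $\tau(S)$, cyclicity of the trace, a unit-Jacobian change of variables in the Fourier integral, and the symplectic identity $S^{-T}J=JS$. The only (cosmetic) difference is that you factor the argument through the intermediate identity $\Phi_{\tau(S)\rho\tau^*(S)}=\Phi_\rho\circ S$ before Fourier transforming, whereas the paper performs the same steps inside a single chain of integral equalities.
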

\begin{proof}
  \begin{align}
    W_{\tau(S)\rho\tau^*(S)}(\bm x) &= \mathrm{TF}[\Phi_{\tau(S)\rho\tau^*(S)}](J\bm x) \\
                                    &= \frac{1}{(\sqrt{2\pi})^M}\intg{\Xc}{e^{-i J \bm x \cdot \bm y} \Tr(\tau(S) \rho \tau(S)^* D(-\bm y))}{\bm y} \\
                                    &= \frac{1}{(\sqrt{2\pi})^M}\intg{\Xc}{e^{-i J \bm x \cdot \bm y} \Tr(\rho \tau(S)^* D(-\bm y)\tau(S))}{\bm y} \\
                                    &= \frac{1}{(\sqrt{2\pi})^M}\intg{\Xc}{e^{-i J \bm x \cdot \bm y} \Tr(\rho D(-S\bm y))}{\bm y} \\
                                    &= \frac{1}{(\sqrt{2\pi})^M}\intg{\Xc}{e^{-i J \bm x \cdot (S^{-1}\bm y')} \Tr(\rho D(-\bm y'))}{\bm y'} \\
                                    &= \frac{1}{(\sqrt{2\pi})^M}\intg{\Xc}{e^{-i J (S\bm x) \cdot \bm y'} \Tr(\rho D(-\bm y'))}{\bm y'} \\
                                    &= \mathrm{TF}[\Phi_\rho](JS\bm x) \\
                                    &= W_\rho(S\bm x). \label{eq:WigSymplTrans}
  \end{align}
\end{proof}

\section{Continuous-variable contextuality}

\noindent In this section, we recall some definitions from the main text and briefly review the CV contextuality framework from~\cite{barbosa2019continuous}.

\subsection{Measurement scenarios} 
\label{subsec:measscenario}

\noindent An abstract description of an experimental setup is formalised as a \textit{measurement scenario}, which is a triple $\tuple{\Xc,\Mc,\bm \Oc}$:
\begin{itemize} 
\item 
in a given setup, experimenters can choose different measurements to perform on the physical system. Each possible measurement is labelled and $\Xc$ denotes the corresponding (possibly infinite) set of measurement labels;
\item
several compatible measurements can be implemented together (for instance, measurements on space-like separated systems). Maximal sets of compatible measurements define a \textit{context}, and $\Mc$ denotes the set of all such contexts;
\item
for a measurement labelled by $\bm x\in \Xc$, the corresponding outcome space is denoted $\bm \Oc_{\bm x} = \langle \Oc_{\bm x},\Fc_{\bm x} \rangle$, which has the structure of a measurable space with an underlying set $\Oc_{\bm x}$ and its associated $\sigma$-algebra $\Fc_{\bm x}$. The collection of all outcome spaces is denoted $\bm \Oc = (\bm \Oc_{\bm x})_{\bm x\in \Xc}$. For various compatible measurements labelled by elements of a set $U \subseteq X$, the corresponding joint outcome space is denoted $\bm\Oc_U\defeq\prod_{\bm x\in U}\bm \Oc_{\bm x}$. 
\end{itemize}
\noindent We refer the reader to \cite{barbosa2019continuous} for formal definitions.
Hereafter, we fix the measurement scenario $\tuple{\Xc,\Mc,\bm \Oc}_{\rm quad}$ as
follows:
\begin{itemize}
\item the set of measurement labels is $\Xc \defeq \R^{2M}$, the phase space
  with the symplectic structure described in the previous section;
\item the contexts are Lagrangian subspaces
  of $\R^{2M}$ so that the set of contexts $\Mc$ is the set of all Lagrangian subspaces of $\Xc$;
\item for each $\bm x \in \Xc$, the corresponding outcome space is $\bm \Oc_{\bm x} \defeq \langle \R,\Bc_\R
  \rangle$, so that for any set of measurement labels $U \subseteq \Xc$, $\Oc_U
  \cong \R^U$ can be seen as the set of functions from $U$ to $\R$ with its
  product $\sigma$-algebra $\Fc_U$.
  \footnotetext{The product \(\sigma\)-algebra is generated by the cylinders:
    subsets of \(\R^U\) of the form \(\prod_{\bm x \in U} E_{\bm x}\) such that
    \(E_{\bm x}\) is different from \(\R\) for at most a finite number of factors
    \(\bm x \in U\). This is the largest topology which makes the projections
    $\pi_{\bm x}$ measurable (the projections are given later in
    \eqref{eq:catprojection}).}
\end{itemize}

\noindent This measurement scenario is to be interpreted as follows. The measurement
corresponding to the label $\bm x \in \Xc$ is given by the measurement of the
corresponding quadrature $\hat{\bm{x}}$ (see Eq.~\eqref{eq:generalised_quadrature}), which itself is
formally described by the PVM $P_{\bm {\hat x}}$
(see Eq.~\eqref{eq:quadrature_probabilities}).

A pair of PVMs of self-adjoint operators is compatible, in the
sense that they admit a joint PVM, if and only if they commute \cite[Th.
9.19]{moretti_spectral_2017}, which in turn is true if and only if the operators
themselves commute \cite[Th. 9.41]{moretti_spectral_2017}.
Following Eq.~\eqref{eq:commutation_relation}, the PVMs associated to
$\bm x,\bm y\in \Xc$ commute if and only if $\Omega(\bm x,\bm y)=0$.
Thus, measurement labels are compatible only when they both belong to some
isotropic subspace of $\Xc$. The maximal isotropic subspaces are the Lagrangian
subspaces, so each context $L \in \Mc$ corresponds to a Lagrangian subspace.

For any set of measurement labels $U \subseteq \Xc$, an element \(f\) of \(\Oc_U\) is a function \(U \to \R\) and should be seen as
describing an experimental outcome obtained from measuring each quadrature in
\(U\): to each quadrature labelled by \(\bm{x} \in U\) is associated an outcome
\(f(\bm{x}) \in \R\). For each $\bm x \in U$, there is an associated
\emph{evaluation map}:
\begin{equation}
    \begin{aligned}
    \pi_{\bm x} : \Oc_U &\longrightarrow \R \\
    f &\longmapsto f(\bm x).
    \end{aligned}
    \label{eq:catprojection}
\end{equation}
If \(O \subseteq \Oc_U\) is a collection of such experimental outcomes,
\(\pi_{\bm x}(O) = \left\{ \pi_{\bm x}(f) = f(\bm x) \mid f \in O \right\}\) is
the implied collection of experimental outcomes for the quadrature \(\bm x\).

When the elements of \(U\) commute, \ie \(U\) is a subset of a context, such a function
describes a genuine experimental outcome that might be obtained from measuring
those quadratures in \(U\). It makes sense mathematically to extend this
definition to sets of non-commuting quadratures, disregarding the question of
compatibility of the associated measurements. Functions \(\Xc \to \R\) which
assign a tentative outcome to \emph{all} quadratures simultaneously are called
\emph{global value assignments}. In contrast, functions on contexts are called
\emph{local value assignments}.

\subsection{Empirical models}

\noindent Empirical models capture in a precise way the probabilistic \emph{behaviours}
that may arise upon performing measurements on physical systems. In practice, these amount to tables of normalised frequencies of outcomes gathered among various runs of the experiment, or to  tables of predicted outcome probabilities obtained by analytical calculation.
\begin{definition}[Empirical model] \label{def_empiricalmodel} 
An empirical model (or empirical behaviour) on a measurement scenario
$\tuple{\Xc,\Mc,\bm\Oc}$ is a family $e = \family{e_C}_{C \in \Mc}$, where $e_C$ is
a probability measure on the space $\bm \Oc_C$ for each context
$C \in \Mc$. 
\end{definition}

In general, empirical models should also satisfy the compatibility conditions: \(\forall C, C' \in \Mc, \quad e_C|_{C \cap C'} = e_{C'}|_{C \cap C'} \). 
This is a generalisation of the no-signaling condition to settings which might not have physical separation of observables. 
The compatibility condition ensures that the empirical behaviour of a given measurement or compatible subset of measurements is independent of which other compatible measurements might be performed along with them.
However, we only consider empirical models that arise from measurements on quantum systems, which therefore automatically verify these conditions.

\noindent In our case, for each context $L \in \Mc$, the set $\Oc_L = \prod_{\bm x \in L}
\R$ can be seen as the set of functions from $L$ to $\R$ with the corresponding product
$\sigma$-algebra.

We consider experiments \textit{arising from quadrature measurements of a quantum system}. 
We thus restrict our attention to empirical models $e = (e_L)_{L \in \Mc}$
satisfying the Born rule, which we refer to as quantum empirical models. In other words, there should exist some quantum state $\rho$
such that for all contexts $L \in \Mc$ and measurable sets $U \in \Fc_L$:
\begin{equation}
    e_L(U) = \Tr\left(\rho \prod_{\bm x \in L} P_{\bm{\hat{x}} } \circ \pi_{\bm x} (U) \right).
\end{equation}
We therefore use the notation $e^{\rho} = (e_L^\rho)_{L \in \Mc}$ to make explicit the dependence in $\rho$.
We may further and unambiguously write, for each $\bm x \in \Xc$ and for each $U
\in \Fc_{\bm x}$~\footnote{We write $\Fc_{\bm x}$ instead of $\Fc_{\left\{\bm x\right\}}$ for brevity.
}:
\begin{equation}
\label{eq:empiricalx}
    e^{\rho}_{\bm x}(U) = \Tr\left(\rho P_{\bm{\hat{x}} } \circ \pi_{\bm x} (U) \right).
\end{equation}

\subsection{Noncontextuality}

\noindent We begin by recalling the notion of hidden-variable model (HVM). Note that HVMs are also often referred to as ontological models \cite{Spekkens2005}. 

The idea behind the introduction of HVMs is that there may exist some space
$\bfLambda$ of hidden variables predetermining the empirical behaviour of a
physical system.
It is desirable to
impose constraints on HVMs which restrict the set of possible empirical behaviours and require that the system under consideration behaves \textit{classically} in some sense.
In the case of Bell-type scenarios, we require that the HVM must be local, \ie factorisable. To avoid fundamental indeterminacy, most hidden-variable theories impose a deterministic underlying description of the empirical behaviour.
However, hidden variables may not be directly accessible themselves, so we allow
that one may only have probabilistic information about the hidden variables in the form of a probability distribution $p$ over $\bfLambda$. 
The empirical behaviour should then be obtained as an average over the hidden-variable behaviours.

\begin{definition}
  A \emph{HVM} on a measurement scenario $\tuple{\Xc,\Mc,\bm\Oc}$ is a tuple
  $\tuple{\bm \Lambda, p, (k_C)_{C \in \Mc}}$ where:
  \begin{itemize}
  \item $\bfLambda = \tuple{\Lambda,\Fc_\Lambda}$ is the measurable space of hidden variables, 
  \item $p$ is a probability distribution on $\bfLambda$,
  \item for each context $C \in \Mc$, $k_C$ is a probability kernel between the measurable spaces $\bfLambda$ and $\bm \Oc_C$. It is measurable with respect to $\bfLambda$ and a probability measure over $\bm \Oc_C$.
  \end{itemize} 
\end{definition}

\noindent Determinism for the HVM is further ensured by requiring that each hidden variable gives a predetermined outcome. That is, for all contexts $C \in \Mc$ and for every $\lambda \in \Lambda$, $k_C(\lambda,\dummy) = \delta_{\bm x}$ is a Dirac measure at some $\bm x \in \Oc_C$.

\begin{definition}
  An empirical model \(e\) on a measurement scenario $\tuple{\Xc,\Mc,\bm\Oc}$ is
  \emph{noncontextual} if there exists a deterministic HVM $\tuple{\bm \Lambda, p, (k_C)_{C \in \Mc}}$ that reproduces the correct statistics by averaging over hidden variables, \ie
  \begin{equation}
    \label{eq:eval_HVM_emp}
    e_C(B) = \intg{\Lambda}{k_C(\lambda,B)}{p(\lambda)}.
  \end{equation}
  If \(e^\rho\) is noncontextual in the above measurement scenario $\tuple{\Xc,\Mc,\bm \Oc}_{\rm quad}$, we say that the density operator \(\rho\) is
  noncontextual for quadrature measurements.
\end{definition}

\noindent Crucially, it was shown in \cite{barbosa2019continuous} that there is a
canonical form of HVM for a noncontextual empirical model $e$
where the hidden-variable space can be taken as the space of global value assignments.
\begin{proposition}
\label{proposition:density}
For a noncontextual empirical model $e$ on a measurement scenario
$\tuple{\Xc,\Mc,\bm \Oc}$, there is a canonical HVM $\tuple{\bm \Lambda, p, (k_C)_{C \in \Mc}}$ for $e$
given by:
\begin{itemize}
    \item $\bfLambda \defeq \bm \Oc_{\Xc}$ (which is itself equal to \(\R^\Xc\),
      the set of functions \(\Xc \to \R\));
    \item $p$ is a probability measure on \(\bfLambda\);~\footnote{We modified slightly the definition of noncontextuality given in \cite{barbosa2019continuous} so it is easier to follow for readers unfamiliar with (non)contextuality. In particular, noncontextuality is defined there as the existence of a global probability distribution $d$ on $\bm \Oc_\Xc$ which gives back the correct statistics via marginalisation over a contexts. Then $p$ here would be taken as $p \defeq d$ from the hypothesis of noncontextuality.}
    \item for all global value assignments $\bm g \in  \Oc_\Xc$, all
      contexts $C \in \Mc$, and all measurables \(E \subseteq \R^C\),
      \begin{equation}
        k_C(\bm g,E) \defeq
        \begin{cases}
          1 \quad \text{if} \quad \bm g|_C \in E; \\
          0 \quad \text{otherwise}.
        \end{cases} \qedhere
      \end{equation}
\end{itemize}
\end{proposition}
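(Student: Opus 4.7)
The plan is to take $p$ to be precisely the global probability measure whose existence is guaranteed by the noncontextuality of $e$ (by Definition~3, there exists $p$ on $\bm\Oc_\Xc$ with $p|_C = e_C$ for every $C\in\Mc$), and then to verify that the resulting triple $\tuple{\bm\Oc_\Xc, p, (k_C)_{C\in\Mc}}$ satisfies every requirement in the definition of an HVM, is deterministic, and reproduces the empirical predictions via \eqref{eq:eval_HVM_emp}.

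First I would check that each $k_C$ is a Markov kernel in the sense of the measure-theory appendix. For each fixed $\bm g \in \Oc_\Xc$, the set function $E \mapsto k_C(\bm g, E)$ is by construction the Dirac measure $\delta_{\bm g|_C}$ on $\bm\Oc_C$: it is a probability measure, and it is a Dirac at a point of $\Oc_C$, so the determinism requirement is automatic. For each fixed measurable $E \subseteq \Oc_C$, I need the function $\bm g \mapsto k_C(\bm g, E) = \mathbf{1}_E(\bm g|_C)$ to be measurable. This reduces to measurability of the restriction map $r_C\colon \Oc_\Xc \to \Oc_C$, $\bm g \mapsto \bm g|_C$. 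Since $\Oc_\Xc = \R^\Xc$ and $\Oc_C = \R^C$ are equipped with their product $\sigma$-algebras, and the product $\sigma$-algebra on $\R^\Xc$ is by construction the coarsest one making every evaluation map $\pi_{\bm x}$ measurable, the restriction $r_C$ is measurable as a tuple of the measurable projections $(\pi_{\bm x})_{\bm x \in C}$. Hence $\mathbf{1}_E \circ r_C$ is measurable, and condition (i) in the Markov-kernel definition holds.

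It remains to verify that averaging the deterministic kernels over $p$ returns $e_C$. This is a direct computation using the push-forward formulation of marginalisation: for every context $C$ and every measurable $E \subseteq \Oc_C$,
\begin{align*}
\intg{\Oc_\Xc}{k_C(\bm g, E)}{p(\bm g)}
&= \intg{\Oc_\Xc}{\mathbf{1}_E(r_C(\bm g))}{p(\bm g)} \\
&= (r_C)_* p\,(E) \\
&= p|_C(E) \\
&= e_C(E),
\end{align*}
where the second line uses the change-of-variables formula applied to the indicator function, the third is simply the definition of the marginal measure along the projection $r_C$, and the last step invokes the noncontextuality hypothesis.

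The argument is essentially a clean bookkeeping exercise, and there is no genuinely hard step. The only mild subtlety is handling an uncountable product $\R^\Xc$: one must avoid the reflex of treating $r_C$ as a literal \emph{countable} product of projections, and instead appeal directly to the universal property of the product $\sigma$-algebra from Definition~\ref{def:productmeasurableinfinite}. Once this measurability is in hand, everything else follows from the defining properties of Dirac measures and of the marginal.
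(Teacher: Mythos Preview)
Your proof is correct. The paper does not actually supply its own proof of this proposition: it attributes the result to \cite{barbosa2019continuous} and, in the footnote, indicates that one simply takes $p \defeq d$, the global probability measure furnished by Definition~3. Your argument is exactly this construction, with the measurability of the restriction map and the averaging identity spelled out carefully; it fills in precisely the bookkeeping that the paper leaves implicit.
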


\noindent In other words, without loss of generality, we can take the hidden variables to be the global value
assignments themselves, and the HVM makes experimental predictions as a
probability distribution over these global assignments. 

\section{``Measuring'' a displacement operator}
\label{app:measDisp}

\noindent In this section, we make the link with previous discrete-variable analogues of our result \cite{howard2014contextuality,delfosse2017equivalence}. In particular, in \cite{delfosse2017equivalence} it is not (clearly) specified that contextuality emerges with respect to the measurements of all (discrete) displacement operators. Since it might not be obvious what ``measuring a displacement'' refers to, as these are not self-adjoint operators, we detail below how it amounts to performing quadrature measurements.

We focus on a single qumode since it can be straightforwardly extended to multimode quantum states.
A quadrature operator such as the position operator $\hat q$ is self-adjoint and can be expanded via the spectral theorem \cite{hall2013quantum} as:
\begin{equation}
    \hat q = \intg{x \in \mathrm{sp}(\hat q)}{x}{P_{\hat q}(x)} \, ,
\end{equation}
where the spectrum of $\hat q$ is $\mathrm{sp}(\hat q) = \R$ and $P_{\hat q}$ is the PVM of $\hat q$ \cite[Th. 8.10]{hall2013quantum}.
For $E \in \Bc(\R)$, $P_{\hat q}(E)$ is given by \cite[Def. 8.8]{hall2013quantum}:
\begin{equation}
    P_{\hat q}(E) = \bm 1_E (\hat q).
\end{equation}
Informally, it assigns 1 whenever measuring $\hat q$ yields an outcome that belongs to $E$.
As previously mentioned, we can view $P_{\hat q}(E)$ as the formal version of the projector $\intg{x \in E}{\ket x\!\bra x}{x}$ (with $\ket x$ a non-normalisable eigenvector of $\hat q$).
Its functional calculus \cite{hall2013quantum} can be expressed as:
\begin{equation}
     f(\hat q) = \intg{x \in \mathrm{sp}(\hat q)}{f(x)}{P_{\hat q}(x)} \, ,
\end{equation}
for $f$ be a bounded measurable function.
Then, we can write the PVM of $f(\hat q)$ via the push-forward operation: 
\begin{equation}
    \forall E \in \Bc(\R), \; P_{f(\hat q)}(E) = P_{\hat q}(f^{-1}(E)).
\end{equation}
It follows immediately that, for $s \in \R$, the PVM for the diagonal phase operator $e^{is\hat q}$ is given, for $E \in \mathcal{B}(\mathbb{S}_1)$, by
\begin{equation}
    P_{\exp(is \hat q)}(E) \defeq P_Q(\{x \in \R \mid e^{isx} \in E\}).
\end{equation}

Define the rotated quadrature $\hat q_\theta \defeq \cos(\theta) \hat q + \sin(\theta) \hat p$ for $\theta \in \left[0,2\pi\right]$ and the phase-shift operator $\hat R(\theta)
\defeq \exp(i \frac{\theta}{2} (\hat q^2 + \hat p^2))$. Then
\begin{equation}
    \hat q_\theta = \hat R(\theta) \hat q \hat R(-\theta),
\end{equation}
so that the PVM of $\hat q_{\theta}$ is given by
\begin{equation}
    P_{\hat q_\theta}(E) = \hat R(\theta) P_{\hat q}(E) \hat R(-\theta).
\end{equation}
Let $(q,p) \in \R^2$. We can find $r \in \R_+$, $\theta \in \left[ 0,2\pi \right]$ such that $(q,p) = (-r\sin(\theta),r\cos(\theta))$.
Then:
\begin{equation}
    \hat D(q,p) = e^{i(p \hat q - q \hat p)} = e^{ir(\cos(\theta) \hat q - \sin(\theta) \hat p)} = e^{ir \hat q_{\theta}}.
\end{equation}
This form allows us to deduce the PVMs of displacement operators. For any $E \in \mathcal{B}(\mathbb{S}_1)$, we have
\begin{align}
    P_{\hat D(q,p)}(E) &= P_{\exp(ir \hat q_{\theta})}(E) \\
      &= P_{\hat q_\theta}(\{x \in \R \mid e^{irx} \in E\})\\
      &= \hat R(\theta) P_{\hat q}(\{x \in \R \mid e^{irx} \in E\}) \hat R(-\theta).
\end{align}

In conclusion, in the precise sense detailed above, ``measuring a displacement operator'' amounts to a quadrature measurement.

\end{document}